\newtheorem{lm}{Lemma}
\newtheorem{cor}[lm]{Corollary}
\newtheorem{definition}[lm]{Definition}
\DeclareMathOperator{\argmax}{argmax}
\DeclareMathOperator{\NSW}{NSW}
\newcommand{\utilapprox}{\rho}
\newcommand{\wcutilapprox}{\utilapprox^*}
\newcommand{\mech}{p}
\newcommand{\out}{o}
\newcommand{\val}{v}
\newcommand{\alloc}{p}
\newcommand{\pa}{q}
\newcommand{\bench}[1][*]{p^{#1}}
\newcommand{\nsw}{\bench[\NSW]}
\newcommand{\ceil}[1]{\lceil#1\rceil}
\newcommand{\floor}[1]{\lfloor#1\rfloor}
\newcommand{\UpperBoundShort}{$O(2^{2\sqrt{\log n}})$}
\newcommand{\UpperBound}{$O(2^{2\sqrt{\log n}})\subseteq o(n^{\epsilon})$ for any constant $\epsilon>0$}
\newcolumntype{C}[1]{>{\centering\arraybackslash}m{#1}}
\newcommand{\valFar}{\val^{a,F}_r}
\newcommand{\valFcr}{\val^{c,F}_r}
\newcommand{\pAFa}{t^{A,F}_1}
\newcommand{\vij}{v_{i,j}}
\newcommand{\pij}{p_{i,j}}
\newcommand{\kr}{k_r}
\newcommand{\Ms}{M_s}
\newcommand{\sar}{s_{a,r}}
\newcommand{\sbr}{s_{b,r}}
\newcommand{\scr}{s_{c,r}}
\newcommand{\sdr}{s_{d,r}}
\newcommand{\sfr}{s_{f,r}}
\newcommand{\sgr}{s_{g,r}}
\newcommand{\shr}{s_{h,r}}
\newcommand{\vIfr}{v^I_{f,r}}
\newcommand{\vIgr}{v^I_{g,r}}
\newcommand{\vIhr}{v^I_{h,r}}
\newcommand{\vFfr}{v^F_{f,r}}
\newcommand{\vFgr}{v^F_{g,r}}
\newcommand{\vFhr}{v^F_{h,r}}
\newcommand{\vr}{v_r}
\newcommand{\hide}[1]{}
\begin{document}
\title{\Large A Truthful Cardinal Mechanism for One-Sided Matching\thanks{Rediet Abebe was supported in part by a Facebook scholarship. Richard Cole was supported in part by NSF grants CCF-1527568 and CCF-1909538. Vasilis Gkatzelis was supported in part by NSF grant CCF-1755955. Jason Hartline was supported by NSF grant CCF-1618502 and part of this work took place while he was visiting Harvard University.}}
\author{Rediet Abebe \thanks{Society of Fellows, Harvard University}
\and Richard Cole \thanks{Courant Institute, New York University}
\and Vasilis Gkatzelis \thanks{Computer Science Department, Drexel University}
\and Jason D. Hartline \thanks{Computer Science Department, Northwestern University}}
\date{}                     



\maketitle

\fancyfoot[R]{\scriptsize{Copyright \textcopyright\ 2020 by SIAM\\
Unauthorized reproduction of this article is prohibited}}

\begin{abstract} \small\baselineskip=9pt
We revisit the well-studied problem of designing mechanisms for one-sided matching markets, 
where a set of $n$ agents needs to be matched to a set of $n$ heterogeneous items. 
Each agent $i$ has a value $v_{i,j}$ for each item $j$, and these values are private 
information that the agents may misreport if doing so leads to a preferred outcome. 
Ensuring that the agents have no incentive to misreport requires a careful design of 
the matching mechanism, and mechanisms proposed in the literature mitigate this issue by
eliciting only the \emph{ordinal} preferences of the agents, i.e., their ranking of 
the items from most to least preferred. However, the efficiency guarantees of these mechanisms
are based only on weak measures that are oblivious to the underlying values. In this paper we 
achieve stronger performance guarantees by introducing a mechanism that truthfully elicits the 
full \emph{cardinal} preferences of the agents, i.e., all of the $v_{i,j}$ values. We evaluate 
the performance of this mechanism using the much more demanding Nash bargaining solution as a benchmark,
and we prove that our mechanism significantly outperforms all ordinal mechanisms (even non-truthful ones). 
To prove our approximation bounds, we also study the population monotonicity of the Nash 
bargaining solution in the context of matching markets, providing both upper and lower bounds 
which are of independent interest.

%
\end{abstract}




\section{Introduction}



%
%
In this paper we consider the classic ``house allocation'' problem of
\citet{HZ-79}. A set of $n$ agents are to be matched, one-to-one, to a
set of $n$ items and each agent $i$ has a value $v_{i,j}$ for each item $j$.
A randomized matching mechanism outputs a probability distribution over matchings, 
which corresponds to a doubly-stochastic matrix $p$, providing the
probability $p_{i,j}$ that $i$ will be matched to $j$; the expected
utility of $i$ in $p$ is $\sum_j v_{i,j}p_{i,j}$. The goal in this setting
is to generate a fair and efficient randomized matching, which crucially
depends on the values of the agents, and the main obstacle is the fact that 
the $v_{i,j}$ values of each agent $i$ are private information of this agent.
Therefore, a successful mechanism needs to elicit the agents' preferences and 
output a desired matching; on the other hand, each agent's goal is to 
maximize her expected utility, so an agent can strategically misreport her
preferences if this increases her expected utility. 

This tension between the objectives of the designer and those of the participants
lies at the core of the sub-field of economics known as mechanism design. Most of 
the proposed solutions in the mechanism design literature, however, leverage monetary 
payments as the main tool that the designer can use to incentivize truthful reporting 
by the agents. This is in contrast to the matching mechanisms in the house
allocation setting which cannot use such payments, making the mechanism design problem 
considered in this paper particularly demanding. In the absence of monetary payments,
an alternative tool for simulating the impact of payments is to use ``money burning''~\citep{hr}. 
In our setting, this could correspond to keeping some of the items unmatched with positive
probability, thus penalizing the agents just like monetary payments would. But, in many
settings, including the house allocation problem studied in this paper, this would be 
unacceptable (e.g., it would imply that some agents may remain homeless while some houses 
remain unoccupied).

Our main result is a novel application of random sampling that enables the use of known 
money burning mechanisms, while ensuring that every agent and item is matched. In other words, 
our technique takes advantage of the improved incentives that these money-burning mechanisms provide,
but without suffering their most important drawback. To verify the usefulness of this approach,
we combine it with the {\em partial allocation} (PA) mechanism of \citet{CGG-13}, giving rise 
to a new mechanism that incentivizes the agents to always truthfully report their \emph{cardinal} 
preferences, i.e., their $v_{i,j}$ values, and yields an outcome $p$ that is approximately both 
fair and efficient.\footnote{\citet{zho-90} shows that there is no mechanism
  that is truthful, symmetric, and Pareto efficient; thus, some notion
  of approximation is necessary.}  We measure the performance of our
mechanism using the canonical benchmark defined by the Nash bargaining
solution and show that our mechanism outperforms the standard mechanisms 
with the same, or weaker, incentive properties.

%
%
The literature on one-sided matching has considered three main approaches, none of
which gives rise to mechanisms that are both truthful and obtain a non-trivial approximation 
of the aforementioned benchmark. \citet{HZ-79} propose the {\em competitive equilibrium from equal
incomes} (CEEI), which depends on the $v_{i,j}$ values in a non-trivial way, but it provides the agents 
with strong incentives to misreport these values, especially for small problem instances.
The {\em random serial dictatorship} (RSD), 
or random priority, mechanism is an important mechanism with a long history in practice. 
This mechanism randomly orders the agents and, following this order, gives to each agent
her favorite item among the ones that are still available. RSD is an \emph{ordinal} mechanism:
it requires only the ordinal preferences of each agent, i.e., only her ranking of the items 
from most to least preferred. It elicits this information truthfully, but its outcomes can 
be very inefficient. The {\em probabilistic serial} (PS) mechanism of \citet{BM-01} is
another ordinal mechanism, and its outcome is computed by continuously allocating to each 
agent portions of her most preferred item that has not already been fully allocated. PS satisfies 
an ordinal notion of efficiency, but it achieves only a trivial approximation of our stronger 
benchmark, and it is not truthful. 
We provide a more detailed discussion regarding these mechanisms and other related work
in Section~\ref{sec:related}.

%
%
Aiming to provide stronger efficiency and fairness guarantees
compared to known mechanisms, we consider a cardinal benchmark: 
the well-studied \emph{Nash bargaining solution},
proposed by~\citet{Nash50}. Given a \emph{disagreement point}, i.e.,
the ``status quo'' that would arise if negotiations among the agents
were to break down, the Nash bargaining solution is the outcome that
maximizes the product of the agents' marginal utilities relative to
their utility for the disagreement point. This outcome indicates the
utility that each agent ``deserves'', so we use this utility as
the benchmark for that agent.  The choice of disagreement point can
depend on the application at hand: if a buyer and a seller are
negotiating a transaction, the disagreement point could be that the
seller keeps the goods and the buyer keeps her money. In one-sided
matching markets the disagreement point needs to be a matching because
leaving an agent without a house is infeasible. Since all agents have
symmetric claims on the items when entering the market, we let the
disagreement point be a matching chosen uniformly at random,
ensuring that each agent is equally likely to be matched to each
item. The Nash bargaining solution therefore corresponds to the
doubly-stochastic matrix $p$ that maximizes $\prod_i (\sum_j
v_{i,j}p_{i,j} -o_i)$, where $o_i=\frac{1}{n}\sum_{j}v_{i,j}$ is the
expected utility of agent $i$ for an item chosen uniformly at
random. 

Since no truthful and symmetric mechanism can guarantee Pareto efficiency
\citep{zho-90}, it is clearly impossible for a truthful mechanism to
implement the Nash bargaining solution, which is symmetric and
Pareto efficient. Thus, we consider the problem of approximating this
solution. Specifically, the Nash bargaining solution defines the utility
that each agent deserves and our goal is to ensure that every agent
receives a good approximation of that benchmark. Formally, a mechanism is a
$\beta$-approximation if the utility of each agent is at least a
$\beta$ fraction of her utility in the Nash bargaining solution.
Note that, once the valuations of each agent $i$ are adjusted
by subtracting $o_i$, then our objective corresponds to the {\em Nash
  social welfare} (NSW), which has recently received a lot of
attention in the fair division literature
(e.g.,~\citealp{CG-18,GHM18,CKMPSW-16,BKV18,BGM-17}). The NSW maximizing
outcome is {\em proportionally fair} in that it satisfies a
multiplicative version of Pareto efficiency, namely, the utility of an
agent cannot be increased by a multiplicative factor without
decreasing the product of utilities of other agents by a greater
multiplicative factor.


En route to proving our mechanism's approximation bounds, we also provide
an analysis of the Nash bargaining solution with respect to its {\em population
monotonicity}, which is of independent interest. It has long been known that, 
unlike the Kalai-Smorodinsky solution, the Nash bargaining solution can violate 
population monotonicity for some instances of the bargaining problem~\citep{T-83,TL-89}. 
That is, there exist instances where
removing some of the agents and computing the updated Nash bargaining
solution can decrease the utility of some of the remaining agents.
When allocating items among competing agents, this lack of
monotonicity is somewhat counterintuitive.  Why would the decreased
competition from agents departing the market not lead to (weakly)
increased utility for the agents remaining in the market?  Indeed, we show that
population monotonicity can be violated in the Nash bargaining
solution for matching markets.  Effectively, the constraint that the
allocation is a distribution over perfect matching introduces
positive externalities between agents.

In order to quantify the extent to which one of the remaining agents'
utility can drop after such a change in the agent population, the
bargaining literature in economics introduced the \emph{opportunity
  structure} notion~(e.g., see the book by \citealp{TL-89}, and
references therein). This structure identifies the largest factor by
which a remaining agent's utility can drop after some subset of agents
is removed. In fact, resembling the standard computer science
approach, the opportunity structure is defined as the
\emph{worst-case} factor over all instances, all removed subsets of
agents, and all remaining agents. In this paper we provide essentially
tight upper and lower bounds for this factor in the context of matching markets, 
showing that in carefully designed worst-case instances, this factor can
grow faster than a polylogarithmic function of the number of agents, yet
slower than any polynomial. Apart from the broader interest in
understanding this measure in matching markets, we show that the upper
bound on the population non-monotonicity provides, up to constant
factors, an upper bound on the approximation factor of the truthful
matching mechanism that we define.


\subsection*{Our Results.}
%
%
In this paper we introduce a random sampling technique which
allows us to translate non-trivial truthful one-sided matching
mechanisms that may produce \emph{partial} matchings (i.e., possibly
leaving some agents unmatched) into ones where (i) every agent is
always assigned an item, and (ii) the incentives for truthful
reporting of preferences are maintained. For example, the truthfulness 
guarantee of the PA mechanism of \citet{CGG-13} depends heavily on its ability to
penalize the agents that cause inconvenience to others; it thereby
ensures that none of these agents is misreporting their
preferences. Since monetary payments are prohibited, this mechanism
penalizes the agents by assigning positive probability to outcomes 
that leave them unmatched. Such partial matchings, however, are unacceptable 
in the house allocation problem. Every agent,
no matter what values she reports, needs to be guaranteed an item, and
this constraint significantly restricts our ability to introduce
penalties. Nevertheless, we show that we can still recreate such
penalties by using random sampling. Applying our sampling technique to
the PA mechanism, we define the \emph{randomized partial improvement} (RPI)
mechanism, which significantly outperforms all the standard matching
mechanisms with respect to the Nash bargaining benchmark.

In essence, the RPI mechanism endows agents with a baseline allocation given
by a uniformly random item and then uses the PA mechanism to improve the agents' 
utility relative to this baseline. In reality, it is not possible to simultaneously 
maintain the baseline and offer improvements to all agents, so RPI circumvents this 
impossibility by imposing these two conditions on a sample of half of the agents instead.  
With half the agents (but all of the items) there is sufficient flexibility to
faithfully implement the PA mechanism with the outside option of a uniform random house.  
After finalizing the allocation of the sampled agents, RPI then recursively allocates 
the unallocated portions of the items to the remaining agents.

%
%

As an intermediate step toward the theoretical analysis of RPI's approximation
factor, we study the extent to which population monotonicity may be violated in 
a one-sided matching market instance. We refer to an instance as $\utilapprox$-utility monotonic 
if removing a subset of its agents can \emph{decrease} a remaining agent's utility in 
the new Nash bargaining solution by a factor no more than $\rho$. We show that, for a very carefully 
constructed (and somewhat contrived) family of instances, $\utilapprox$ can be as high as $\Omega(2^{\sqrt{\log n}/2})$
and we complement this bound with an essentially tight upper bound, by proving that for any one-side matching instance $\rho$ is no more
than \UpperBound. 

Apart from the broader interest in understanding the extent to which the Nash bargaining solution may
violate population monotonicity, our upper bound on $\utilapprox$ also directly implies an upper bound for the
approximation factor of RPI. Specifically, we prove that RPI guarantees to {\em every} agent a 
$4\,e\,\utilapprox$ approximation of the utility that she gets in the Nash bargaining benchmark.
Therefore, as a corollary, we conclude that RPI approximates the Nash bargaining benchmark within
\UpperBound, even with the worst case choice of $\rho$.
In stark contrast to this upper bound, which is strictly better than any polynomial, we show that 
the approximation factor of all ordinal mechanisms (even ones that are not truthful, such as probabilistic
serial) grows linearly with the number of agents. Therefore, our mechanism significantly outperforms
all ordinal mechanisms while at the same time satisfying truthfulness. 


\paragraph{Structure.} Section~\ref{s:prelim} provides some preliminary
definitions and Section~\ref{s:benchmark} formally introduces the benchmark
and approximation measure used throughout the paper. Our results showing that
ordinal mechanisms fail to achieve any non-trivial approximation are in Section~\ref{s:ordinal};
Section~\ref{s:mechanism} includes the description of our mechanism and the proofs
regarding its truthfulness and fairness guarantees. Finally, in Section~\ref{s:monotonicity}
we study the population monotonicity of the Nash bargaining solution and provide both
upper and lower bounds for it.

%
%
%


\section{Preliminaries}
\label{s:prelim}

Given a set $N$ of $n$ agents and a set $M$ of $n$ items, a randomized
matching can be represented by a doubly-stochastic matrix $\alloc$ of
marginal probabilities, where $\alloc_{i,j}$ denotes the marginal
probability that agent $i$ is allocated item $j$. Clearly, any probability
distribution over matchings implies a double-stochastic matrix, and the 
Birkhoff-von-Neumann theorem shows that \emph{any} doubly-stochastic matrix 
can be implemented as a probability distribution over matchings. 
Denote by $v$ a matrix of agent values where $v_{i,j}$ is
the value of agent $i$ for item $j$.  The expected utility of agent
$i$ for random matching $p$ is $u_i = \sum_{j\in M} \val_{i,j}\,\alloc_{i,j}$.  
The random matching $p$ that a mechanism outputs when 
the agents' reported values are $\val$ is denoted by $\alloc(\val)$.

For each agent $i$, her values $v_i =(\val_{i,1},\ldots,\val_{i,n})$ are private and a 
matching mechanism must be designed to properly elicit them. A mechanism is {\em truthful} if 
it is a dominant strategy for each agent $i$ to report her true values. That is, if we let 
$\alloc(w_i,\val_{-i})$ denote the outcome of the mechanism when agent $i$ reports values 
$w_i$ and all the other agents report values $\val_{-i}$, then a mechanism is truthful if
for every agent $i$, any matrix of values $\val$, and any misreports $w_i$: 
\[\sum_{j\in M}
\val_{i,j}\,\alloc_{i,j}(\val) ~\geq~ \sum_{j\in M} \val_{i,j}\,\alloc_{i,j}(w_i,\val_{-i}).\]

Our benchmark, formally defined in the following section, uses the {\em Nash social welfare} (NSW)
objective on appropriately adjusted agent valuations. The NSW maximizing outcome is known to 
provide a balance between fairness and efficiency by maximizing the geometric mean (or, 
equivalently, the product) of the agents' expected utilities, i.e., $\max_{\alloc} \prod_i
\left(\sum_j\val_{i,j}\,\alloc_{i,j}\right)$. The partial allocation mechanism from \citet{CGG-13}
provides a truthful approximation of that outcome and can be easily adapted to randomized matchings 
by interpreting fractional allocations as probabilities.

\begin{definition}
The {\em partial allocation} (PA) mechanism on values $\val$ works as follows:
\begin{enumerate}
\item Compute the doubly-stochastic matrix $\nsw(\val)$ that maximizes
  the Nash social welfare.
\item For each agent $i$, compute $f_i$ as follows:
\begin{enumerate}
\item Let $u_k$ be agent $k$'s utility in $\nsw(\val)$.
\item Let $u'_k$ be  agent $k$'s utility in $\nsw(\val_{-i})$, i.e., in the NSW maximizing allocation with 
agent $i$ absent and all other agents restricted to one unit, i.e., $\sum_j \nsw_{k,j}(\val_{-i})=1$ for all $k\neq i$.
\item 
Let $f_i = { \prod\nolimits_{k\ne i} u_k} \,\Big\slash\, { \prod\nolimits_{k\ne i} u'_k}.$
\end{enumerate}
\item Allocate each item $j$ to each agent $i$ with probability $\pa_{i,j} = f_i\,\nsw_{i,j}(\val)$.
\end{enumerate}
\end{definition}
 
Note that the fraction $f_i$ of the NSW maximizing assignment
allocated to agent $i$ is equal to the relative loss in utility that
$i$'s presence imposes on the other agents.  The denominator
is independent of $i$'s declared valuations, so, in maximizing $f_i
\cdot u_i$, which would be agent $i$'s goal, she is maximizing the NSW
when she reports truthfully.  \citet{CGG-13} show that $f_i\in (1/e,
1]$, without the unit constraint on allocations, but the same argument
  holds with the unit constraint.

\begin{theorem}[\citealp{CGG-13}]\label{thm:CGG13}
The partial allocation mechanism is truthful, feasible, and allocates each agent $i$ at fraction $f_i$ of the NSW maximizing assignment, where $f_i$ is at least $1/e$.
\end{theorem}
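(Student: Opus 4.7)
Fix agent $i$ and the truthful reports $\val_{-i}$ of the others. The denominator $\prod_{k \neq i} u'_k$ of $f_i$ depends only on $\val_{-i}$ and is therefore constant in the report $w_i$ of agent $i$. Writing $Q := \nsw(w_i,\val_{-i})$ for the NSW allocation of the reported profile, agent $i$'s true expected utility under PA equals
\[
f_i \cdot \sum_j v_{i,j}\,Q_{i,j} \;=\; \frac{\prod_k \big(\sum_j v_{k,j}\,Q_{k,j}\big)}{\prod_{k \neq i} u'_k},
\]
since $w_k = v_k$ for every $k \neq i$. The numerator is the Nash product at true valuations evaluated at $Q$, and it is maximized over all doubly stochastic $Q$ precisely at $Q = \nsw(\val_i,\val_{-i})$; hence $w_i = v_i$ is a best response.

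\textbf{Feasibility.} Because $\nsw(\val)$ is doubly stochastic and $\pa_{i,j} = f_i\,\nsw_{i,j}(\val)$, the remaining task is to prove $f_i \le 1$. I would delete agent $i$'s row from $\nsw(\val)$: the resulting matrix has unit row sums for each $k \ne i$ and column sums $1 - \nsw_{i,j}(\val) \le 1$, so it is feasible for the $(n-1)$-agent NSW problem and already achieves product $\prod_{k \ne i} u_k$. Therefore $\prod_{k \ne i} u'_k \ge \prod_{k \ne i} u_k$, i.e., $f_i \le 1$.

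\textbf{The bound $f_i \ge 1/e$.} This is the main obstacle. I would extend the $(n-1)$-agent optimizer $\nsw(\val_{-i})$ to a doubly stochastic matrix $\hat\pa$ on all $n$ agents via $\hat\pa_{k,j} := \nsw_{k,j}(\val_{-i})$ for $k \ne i$ and $\hat\pa_{i,j} := 1 - \sum_{k \ne i} \nsw_{k,j}(\val_{-i})$, and consider the segment $\pa_\lambda := (1-\lambda)\,\nsw(\val) + \lambda\,\hat\pa$. Since $\nsw(\val)$ maximizes the smooth concave function $\sum_k \log u_k$ on the doubly stochastic polytope and $\pa_\lambda$ stays in that polytope, the one-sided derivative at $\lambda = 0$ is non-positive, yielding
\[
\sum_k \frac{u_k(\hat\pa)}{u_k} \;\le\; n.
\]
Using $u_k(\hat\pa) = u'_k$ for $k \ne i$ and discarding the non-negative term for $k = i$ leaves $\sum_{k \ne i} u'_k/u_k \le n$, and AM-GM on these $n-1$ positive ratios produces
\[
\prod_{k \ne i}\frac{u'_k}{u_k} \;\le\; \Big(\frac{n}{n-1}\Big)^{n-1} \;\le\; e,
\]
which is exactly $f_i \ge 1/e$. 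The delicate step is the first-order inequality; it is what converts the global optimality of $\nsw(\val)$ into a clean linear constraint on the ratios $u_k(\hat\pa)/u_k$. Once the right completion $\hat\pa$ is identified and this inequality is in hand, the remaining AM-GM computation is routine.
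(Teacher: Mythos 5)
The paper does not actually prove this statement; it is quoted as a known result from \citet{CGG-13}, with a one-sentence intuition for truthfulness and a remark that the argument of CGG-13 carries over to the unit-demand setting. Your proof is correct and supplies the details the paper leaves to the citation. The truthfulness argument is the paper's intuition made precise: the denominator $\prod_{k\ne i}u'_k$ is independent of $i$'s report, so $i$'s utility is proportional to the true Nash product at the reported-profile allocation, which is maximized by reporting truthfully. For $f_i\le 1$, deleting row $i$ from $\nsw(\val)$ is feasible for the $(n-1)$-agent problem (with unit row sums), whence $\prod_{k\ne i}u'_k\ge\prod_{k\ne i}u_k$. For $f_i\ge 1/e$, your completion $\hat\pa$ is genuinely doubly stochastic precisely because the definition forces $\sum_j\nsw_{k,j}(\val_{-i})=1$ for $k\ne i$ (so the leftover row $\hat\pa_{i,\cdot}$ has row sum $n-(n-1)=1$, and the column constraint of the $(n-1)$-agent problem gives $\hat\pa_{i,j}\ge 0$); the first-order optimality of $\nsw(\val)$ for the concave log-product along the segment toward $\hat\pa$ gives $\sum_k u_k(\hat\pa)/u_k\le n$, and dropping the nonnegative $i$-term plus AM--GM gives $\prod_{k\ne i}u'_k/u_k\le\bigl(\tfrac{n}{n-1}\bigr)^{n-1}<e$. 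This is structurally the CGG-13 argument adapted to the matching polytope, and it is exactly where the unit constraint matters. Two small points worth noting: the bound $\bigl(\tfrac{n}{n-1}\bigr)^{n-1}<e$ is strict, so $f_i>1/e$ (matching CGG-13's $f_i\in(1/e,1]$ rather than just $\ge 1/e$); and the directional-derivative step implicitly assumes $u_k(\nsw(\val))>0$ for all $k$, which holds whenever the optimal Nash product is positive (if it is zero the claim is vacuous).
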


\section{The Nash Bargaining Benchmark}
\label{s:benchmark}

In this section, we define our cardinal benchmark as well as an
approximation measure for evaluating mechanisms for the one-sided
matching problem. Our benchmark is the Nash bargaining solution with a
uniformly random matching as the disagreement point.  Each agent $i$'s
expected utility for this disagreement point is $\out_{i} =
\frac{1}{n}\sum_{j} \val_{i,j}$ and the Nash bargaining solution is
the outcome $p^*$ that maximizes the Nash Social Welfare
objective with respect to the marginal valuations $\val - o$.  In
other words, the Nash bargaining solution distributes the additional
value, beyond each agent's outside option, in a fair and efficient
manner.

\begin{definition} 
The Nash bargaining solution with disagreement point $(o_i)_{i\in N}$ is
$$
p^* = \argmax_{\alloc} \left\{\prod\nolimits_i \Big(\sum\nolimits_j  \val_{i,j} \,\alloc_{i,j} - \out_i \Big)\right\},
$$ 
where every agent $i$ is constrained to have non-negative utility
$\sum\nolimits_j \val_{i,j} \alloc_{i,j} - \out_i \geq 0$.
\end{definition}

Apart from its fairness properties, this benchmark is also appealing
because of its invariance to additive shifts and multiplicative
scalings of any agent's values for the items. Shifting all the values
of an agent by adding some constant does not affect the marginal
values after the outside option is subtracted.  Also, scaling all of
the values of an agent by some constant does not have any impact on
what the Nash bargaining solution, $p^*$, is; the product value of
every outcome is multiplied by the same constant, and hence the
optimum is unaffected.  As a result, we do not need to assume that the
values reported by the agents are scaled in any particular way.  One
thing to note about the benchmark being invariant to these changes is
that, on instances where the agents' values are identical up to shifts
and scales, the benchmark assignment is the uniform random
assignment.%
\footnote{The combined property of shift and scale invariance has some
  counterintuitive implications.  Consider an example instance where
  all agents $i$ have value $v_{i,1} > 1$ for item 1, and $v_{i,j} =
  1$ for all other items $j \in \{2,\ldots,n\}$.  In the Nash
  bargaining solution, all agents receive a uniform random item and in
  particular a $1/n$ fraction of the preferred item~1.  This outcome
  may seem surprising as it does not account for the possibility that
  some agents may prefer item 1 much more than other agents.  This
  uniform outcome results because the agents' preferences are
  equivalent up to additive and multiplicative shifts.}

Our goal is to approximate $p^*$, the Nash bargaining solution with
disagreement point $(o_i)_{i\in N}$, using the following per-agent 
approximation measure.  

\begin{definition}
The per-agent approximation of mechanism $\mech$ with respect to benchmark
assignment $\bench$ is the worst-case ratio of the utility of any
agent in $\bench$ and $\mech$,
 $$\max_{\val}\left\{ \max_{i} \left\{\frac{\sum_j \val_{i,j} \bench_{i,j}(\val)}{\sum_j \val_{i,j} \mech_{i,j}(\val)}\right\}\right\}.$$
\end{definition}


\section{Inapproximability by Ordinal Mechanisms}
\label{s:ordinal}

Ordinal mechanisms are popular in the literature on matching.  Rather
than asking agents for cardinal values for each item, an ordinal
mechanism need only solicit an agent's preference order over the
items.  Two prevalent ordinal mechanisms are the random serial
dictatorship (RSD) and probabilistic serial (PS) mechanisms.  One
of our main motivations for studying cardinal mechanisms in this paper 
is that ordinal mechanisms are bound to generate unfair allocations for
some instances, due to the fact that they disregard the intensity of the 
agents' preferences; even when the agents agree, or partially agree, on 
their preference order, they may still disagree on preference intensities.
A mechanism that does not take these intensities into consideration is, for
example, unable to distinguish between agents whose favorite item is very 
strongly preferred over the rest, and agents who have only a slight preference 
for their top item over the rest.

Our first lower bound shows that the random serial dictatorship mechanism
can be very unfair to some agent, leading to an approximation factor as
bad as $n$ (the number of agents).

\begin{lemma}
The worst case approximation ratio of the random serial dictatorship
(RSD) mechanism with respect to the Nash bargaining benchmark is $n$.
\end{lemma}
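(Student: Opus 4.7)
The plan is to establish matching upper and lower bounds of $n$ on the per-agent approximation ratio. For the upper bound, fix any instance and any agent $i$, and let $M_i = \max_j v_{i,j}$. Her NB utility is at most $M_i$ since her allocation is a probability distribution over items. In RSD she is the first picker with probability $1/n$, in which case she receives her favorite item and obtains utility $M_i$; her utility is non-negative in every other outcome. So her expected RSD utility is at least $M_i/n$, and the per-agent ratio is at most $n$.

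For the matching lower bound I would construct the following instance, parameterized by a small $\epsilon > 0$: agent $1$ has $v_{1,1} = 1$ and $v_{1,j} = 0$ for $j \geq 2$, while each agent $i \geq 2$ has $v_{i,1} = 1$, $v_{i,i} = 1 - \epsilon$, and $v_{i,j} = 0$ otherwise. Every agent has item $1$ as her strict top choice, so under RSD agent $1$ obtains item $1$ precisely when she is first in the random order, yielding her expected utility exactly $1/n$. To lower bound her NB utility I would use the permutation symmetry over the agents $\{2, \ldots, n\}$: averaging any NB-optimal allocation over this group produces a symmetric allocation with the same value of $p_{1,1}$ and, by log-concavity of the NB product, the same objective value, so it suffices to maximize over the symmetric family. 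This family is two-dimensional, parameterized by $x = p_{1,1}$ and a common diagonal mass $\beta = p_{i,i}$ for $i \geq 2$, with all other entries pinned down by the doubly-stochastic constraints (in particular $p_{i,1} = p_{1,j} = (1-x)/(n-1)$).

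The main technical obstacle is to show that this symmetric optimum assigns $x = 1$, giving agent $1$ utility exactly $1$. I would first observe that agent $i$'s marginal is increasing in $\beta$ for any fixed $x$, so I set $\beta$ to its feasible maximum $(n - 2 + x)/(n-1)$. A direct calculation then reduces agent $i$'s marginal to $(n-2+\epsilon)/n - \epsilon(n-2+x)/(n-1)$, whose dependence on $x$ is a linear term with slope $-\epsilon/(n-1)$; in particular the marginal is bounded below by a positive constant close to $(n-2)/n$ while varying by only $O(\epsilon)$ over $x \in [1/n, 1]$. Differentiating the resulting one-parameter product $(x - 1/n)\cdot(\text{marginal}_i(x))^{n-1}$ in $x$ yields a derivative of the form $M(x)^{n-2}[M(x) - \epsilon(x - 1/n)]$, which is non-negative on $[1/n, 1]$ whenever $\epsilon \leq (n-2)/(2(n-1))$, so the maximum is attained at $x = 1$. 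Combining this with the RSD computation yields per-agent ratio $1/(1/n) = n$, matching the upper bound.
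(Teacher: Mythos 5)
Your proof is correct and shares the paper's lower-bound instance (and the observation that agent~1's RSD utility is exactly $1/n$), but it is more rigorous where the paper is hand-wavy. The paper's proof appeals to continuity of the Nash bargaining solution in $\epsilon$ to conclude agent~1's NB utility tends to $1$ as $\epsilon\to 0$ (and the published text even contains an evident slip, citing continuity of ``the distribution that RSD outputs'' rather than of the NB solution). You instead compute the NB solution exactly for every fixed small $\epsilon>0$: symmetrize over agents $\{2,\dots,n\}$, reduce to the two parameters $(x,\beta)$, push $\beta$ to its feasible maximum by monotonicity of agent $i$'s surplus, and show the one-variable product $(x-1/n)\,M(x)^{n-1}$ is increasing on $[1/n,1]$ whenever $\epsilon\le(n-2)/(2(n-1))$, so the NB solution is the identity matching, agent~1's NB utility is exactly $1$, and the ratio is exactly $n$ rather than merely approaching $n$ in the limit. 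You also supply the matching upper bound --- each agent is first in RSD with probability $1/n$ and then receives her favorite item, so her RSD utility is at least $\max_j v_{i,j}/n$, while her NB utility is a convex combination bounded by $\max_j v_{i,j}$ --- which the lemma's ``is $n$'' requires but the paper's proof leaves implicit. In short, same instance and same core idea, but your write-up is the more complete and self-contained proof.
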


\begin{proof}
Consider the example where agent 1 has value 1 for item 1 and no value for any other item, and each agent 
$i\geq 2$ has value 1 for item 1, value $1-\epsilon$ for item $i$, and no value for other items:
\begin{align*}
\val &= 
\begin{bmatrix}
1      & 0      & 0      & \ldots & 0 \\
1      & 1-\epsilon & 0      & \ldots & 0 \\
1      & 0      & 1-\epsilon & \ldots & 0 \\
\vdots & \vdots & \vdots & \ddots & \vdots \\
1      & 0      & 0      & \ldots & 1-\epsilon 
\end{bmatrix}.
\end{align*}
In RSD an ordering of the agents is generated uniformly at random,
and then each agent is allocated her favorite available item in that
order. In this instance, the first agent in the random ordering will
always select item 1, and every agent has the same probability, $1/n$,
of being ordered first. Since agent 1 has no value for any other item,
the expected utility of this agent 1 in RSD is $1/n$.

On the other hand, as $\epsilon$ approaches zero, the Nash bargaining solution 
assigns each agent $i$ to item $i$ with probability that approaches 1. To verify 
this fact, note that for $\epsilon = 0$
the Nash bargaining would assign agent $i$ to item $i$ with
probability 1, and observe that the distribution that RSD outputs is
continuous in $\epsilon$. Thus, the utility of each agent in the Nash
bargaining solution -- and specifically of agent 1 -- approaches 1.
As a result, the RSD mechanism is being unfair to agent 1, leading to
an approximation factor of $n$.
\end{proof}

In fact, with a small modification of the instance used to verify how
unfair the RSD mechanism can be, the following theorem shows that
\emph{every} ordinal mechanism is susceptible to this issue.

\begin{theorem}
The worst case approximation ratio of any ordinal mechanism
to the Nash bargaining benchmark is at least $n-1$.
\end{theorem}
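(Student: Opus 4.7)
The plan is to use an ``indistinguishability plus pigeonhole'' argument: because an ordinal mechanism $M$ must return the same doubly stochastic matrix $q$ on any two cardinal instances that induce the same strict ordinal preference profile, I will build a family $\{I_k\}_{k=2}^{n}$ of cardinal instances, all inducing a common ordinal profile $\sigma$, in which each index $k$ designates a particular ``special'' agent whose cardinal valuations make her Nash bargaining utility large while her mechanism utility is essentially equal to $q_{k,1}$. Pigeonhole on the first column of $q$ will then isolate an index $k^*$ for which the mechanism undershoots the Nash bargaining benchmark by the desired factor.

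For the construction, fix small $\epsilon, \delta > 0$. In instance $I_k$, let agent $k$ be the special agent with values $(1, 0, \ldots, \delta, \ldots, 0)$, where the tiny $\delta$ is placed at position $k$ purely to enforce strict preferences; let agent $1$ have the fixed values $(1, 1-\epsilon, 0, \ldots, 0)$ across the entire family; and let each other agent $j \ne 1, k$ have values $(1, 0, \ldots, 1-\epsilon, \ldots, 0)$ with $1-\epsilon$ at position $j$. With canonical tie-breaking of the zeros, the induced strict ordinal profile $\sigma$ is the same for every $k$, so $q = M(\sigma)$ is fixed across the family. Agent $k$'s mechanism utility in $I_k$ is $q_{k,1} + \delta \, q_{k,k}$, which tends to $q_{k,1}$ as $\delta \downarrow 0$; since $\sum_{k=2}^{n} q_{k,1} \le 1$, some $k^* \in \{2, \ldots, n\}$ satisfies $q_{k^*,1} \le 1/(n-1)$.

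The remaining task is to show that in $I_{k^*}$ the Nash bargaining solution gives agent $k^*$ utility approaching $1$ as $\epsilon, \delta \downarrow 0$, at which point the per-agent approximation ratio is at least $(1-o(1))(n-1)$. The candidate NB allocation assigns agent $k^*$ item $1$, agent $1$ item $2$, and each remaining agent $j$ her ``own'' item $j$; one then verifies that every agent's adjusted utility exceeds her outside option and invokes a first-order-optimality calculation to rule out improving perturbations. The main obstacle is this NB feasibility check when $k^* \ne 2$: both agent $1$ (whose second choice is item $2$) and agent $2$ (whose own item is also item $2$) compete for item $2$, and a naive symmetric split between them can drive their adjusted utilities negative. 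I would handle this by an asymmetric perturbation that transfers a vanishing share of item $1$ back to agents $1$ and $2$ in exchange for a larger share of item $2$, costing only $o(1)$ in agent $k^*$'s NB utility and thus preserving the $n-1$ bound in the limit.
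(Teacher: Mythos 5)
Your high-level plan — exhibit a single ordinal profile shared by a family of cardinal instances, then use pigeonhole on the first column of the mechanism's doubly-stochastic matrix — is structurally the same as the paper's (which phrases the pigeonhole as ``WLOG up to agent relabeling''). The gap is in the cardinal family you chose, and in particular in the claim that agent $k^*$'s Nash bargaining utility in $I_{k^*}$ approaches $1$.

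For $k^* \ne 2$, agents $1$ and $2$ both have second choice item $2$, while item $k^*$ is essentially worthless to everyone. Concretely, taking $\epsilon, \delta \to 0$, agents $1$ and $2$ each value only items $1$ and $2$ (at $1$ each), agent $k^*$ values only item $1$ (at $1$), and all other agents get their own item $j$ uncontested. So the NSW objective restricted to the contested block reduces to maximizing $(a - 1/n)\,(1 - a/2 - 2/n)^2$, where $a = p_{k^*,1}$ and agents $1,2$ symmetrically split the remaining supply of items $1$ and $2$. The first-order condition gives $a = \tfrac{2}{3}\bigl(1 - \tfrac{1}{n}\bigr)$, not $a \to 1$. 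This is not a feasibility correction that can be made vanishingly small: giving agent $k^*$ all of item $1$ is feasible (agents $1,2$ would still have positive adjusted utility $1/2 - 2/n$), but the NSW maximizer strictly prefers to shift about a third of item $1$ to agents $1,2$. Your ``asymmetric $o(1)$ perturbation'' is therefore a $\Theta(1)$ perturbation, and agent $k^*$'s benchmark utility is $\approx 2/3$, yielding a lower bound of $\approx \tfrac{2}{3}(n-1)$ rather than $n-1$.

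The reason the paper's instance does not suffer from this is the ``absorber'' agent $n$ with values $(0,1,1,\ldots,1)$: she soaks up whichever of items $\{2,\ldots,n\}$ is left over, so every item is fully valued, no two of the top $n-1$ agents share a second choice, and the unlucky agent's benchmark utility genuinely approaches $1$. If you want to keep your family-of-instances framing, the fix is to replace your (implicitly overloaded) agent structure with one containing such an absorber, so that for every choice of $k^*$ the contested block is a perfect matching at full utility rather than a three-way fight over two valuable items.
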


\begin{proof}
Consider the following instance $\val$, where agents correspond to
rows and items to columns:
\begin{align*}
\val &= 
\begin{bmatrix}
1 &\epsilon& 0 & 0 & \ldots & 0 & 0 \\
1 & 0 & 1-\epsilon& 0 & \ldots & 0 & 0\\
1 & 0 & 0 &1-\epsilon & \ldots & 0 & 0 \\
\vdots & \vdots & \vdots& \vdots &\ddots & \vdots & \vdots \\
1 & 0 & 0 & 0 & \ldots & 1-\epsilon & 0 \\
1 & 0 & 0 & 0 & \ldots & 0& 1-\epsilon \\
0 & 1 & 1 & 1 & \ldots & 1 & 1 
\end{bmatrix}.
\end{align*}
A key property of this instance is that the top $n-1$ agents are
ordinally indistinguishable.  Each of them ranks item~1 first, one of items
$\{2,\ldots,n\}$ second, and all other items last.  On the other hand,
each item $j \in \{2,\ldots,n\}$ is ordinally indistinguishable.  Each
is ranked second by exactly one of the top $n-1$ agents and
ranked equivalently by agent $n$.

Fix an ordinal mechanism. The ordinal indistinguishablity of agents
$\{1,\ldots,n-1\}$ implies, without loss of generality up to agent
relabeling, that agent~1 receives item~1 with probability at most
$1/(n-1)$.  Thus, in the limit of $\epsilon$ going to $0$, agent~1
obtains a utility of $1/(n-1)$ in this ordinal mechanism.

The Nash bargaining solution is continuous in $\epsilon$ and with
$\epsilon=0$ it gives each agent the maximum utility of 1 by
allocating item 1 to agent 1, item 2 to agent $n$, and item $i+1$ to
agent $i$ for $i \in \{2,\ldots,n-1\}$. Thus, in the limit, as
$\epsilon$ goes to zero the Nash bargaining solution gives agent 1 a
utility of 1. Therefore, the per-agent approximation of any ordinal 
mechanism with respect to the Nash bargaining benchmark is $n-1$.
\end{proof}

\section{Randomized Partial Improvement}
\label{s:mechanism}

In this section, we define the random partial improvement 
matching mechanism.  This mechanism truthfully elicits the agents'
cardinal preferences and uses them in a non-trivial manner to select
an outcome. We prove that the per-agent approximation of this
mechanism with respect to the Nash bargaining benchmark is
proportional to the population monotonicity of the benchmark and its
worst-case approximation is significantly better than that of any ordinal mechanism.
The approach of the mechanism is to run the PA mechanism with the outside
option given by the uniform random assignment on a large sample of the
agents and a large fraction of the supply.  The resulting mechanism
inherits the truthfulness of the PA mechanism.

There are two key difficulties with this approach.  First, in order to
faithfully implement the outside option, some of the supply needs to
be kept aside in the same proportion as the original supply.  
To enable this set aside, we need to reduce the allocation consumed
by the PA mechanism;
we achieve this with a novel use of random sampling \citep[cf.][]{GHKSW-06}. 
Second, it is non-trivial to compare
an agent's utility across Nash social welfare maximizing assignments
for the original market and a sample of the market.  A major endeavor
of our analysis shows that per-agent utility is approximately
monotone, i.e., the fraction of an agent's utility that is lost 
as the competition from other agents decreases is non-trivially bounded. 
(Note, competition from other agents decreases
as they are removed from the market.) Our mechanism, then, is
structured to take advantage of this approximate monotonicity.

The mechanism is defined by a sequence of steps that gradually
construct a doubly-stochastic matrix. By the Birkhoff von Neumann
theorem, this matrix corresponds to a probability distribution over
matchings.  The high-level steps and intuition are as follows: the
mechanism samples half the agents and runs at half scale (i.e., with
half-unit-demand agents and half-unit-supply items) the PA mechanism
with the outside option given by the uniform random assignment. The
total demand of half the agents (roughly $n/2$) with half-unit demand
is a quarter of the total supply (roughly $n/4$), so there is a
leftover $n/4$ of supply from the half-units on which
the PA mechanism was run.  A further one-quarter of each of the $n$
units is used to provide a half-unit of the outside option to each of
the (roughly) $n/2$ agents in the sample.  The final quarter is used
to replace as necessary the fractions of items withheld due to the fractional
reduction in the PA mechanism.  Necessarily, the one-unit allocation
to these agents uses up half the supply.  The remaining half
of the supply is then allocated recursively to the remaining half of
the agents.  

A formal description of this mechanism is below. Since
we call this mechanism recursively after some agents' allocation has been 
finalized (in the form of marginal probabilities) and some portions 
of the items have been allocated, we define it for the remaining $\bar{n}\leq n$
agents and the original $n$ items whose capacities may have been reduced from 1
to $(c_i)_{i\in M}\in [0,1]^n$.

\begin{definition}
Given some value $n_0\in \mathbb{N}$, the {\em randomized partial improvement} (RPI) 
mechanism on $\bar{n} \leq n$ agents and $n$ items with 
supplies $c_1,\ldots,c_n$ such that $\sum_{j=1}^n c_j = \bar{n}$ works as follows:
\begin{enumerate}
\item If $\bar{n}<n_0$, allocate the remaining item capacities uniformly at random, i.e.,
return $\alloc_{i,j} = c_j/\bar{n}$ for each agent $i$ and terminate. Otherwise, continue.
\item Randomly sample a subset $N'$ of $n' = \ceil{\bar{n}/2}$ agents.
\item On the sampled agents run the PA mechanism with the
  outside option given by the uniform random assignment $o'_i=\frac{1}{\bar{n}}\sum_{j=1}^n v_{i,j}c_j$ 
	from the supplies. Denote the allocation of item $j$ to agent $i$ by
  $\pa'_{i,j}$; the total amount allocated to agent $i$ is $f'_i = \sum_{j=1}^n \pa'_{i,j}$.
\item Allocate to each $i\in N'$ half of their PA assignment
  and ``pad'' it with the outside option to ensure a unit allocation. 
	As a result, the total allocation of item $j$
  to agent $i$ is $\alloc'_{i,j} = \pa'_{i,j}/2 + (1-f'_i/2)\, c_j / \bar{n}$.
\item Recursively run RPI on the remaining $n'' = \bar{n}-n'$ agents and item supplies 
$c''_j = c_j - \sum_{i\in N'} \alloc'_{i,j}$.
\item Return the assignment $\alloc$ that combines the assignment $\alloc'$ for 
$N'$ and the assignment $\alloc''$ returned by the recursive call for the remaining $n''$ agents.
\end{enumerate}
\end{definition}


The following proof of correctness (feasibility and truthfulness)
formalizes the intuition preceding the definition of the mechanism.
The ideas of the proof are more transparent in the case where $\bar{n}$ is
even and, in particular, $\ceil{\bar{n}/2} = \floor{\bar{n}/2} = \bar{n}/2$.

\begin{theorem}
The randomized partial improvement mechanism with $n_0 \geq 4$ on $n$
unit-demand agents and $n$ unit-supply items is feasible, i.e., it gives
fractional allocations that produce a doubly stochastic matrix, and
truthful, i.e., it is a dominant strategy for each agent to truthfully
report her value for each item.
\end{theorem}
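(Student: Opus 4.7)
The proof proceeds by induction on the number of remaining agents $\bar{n}$. The base case $\bar{n}<n_0$ is immediate: the uniform allocation $p_{i,j}=c_j/\bar{n}$ satisfies the row and column constraints and is independent of every agent's report, so feasibility and truthfulness are trivial. For the inductive step, I would handle feasibility and truthfulness separately, leaning on the inductive hypothesis applied to the recursive call on $n''$ agents with supplies $c''$.

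For feasibility, each sampled agent's row sums to $1$ by direct computation:
\[
\sum_j p'_{i,j} \;=\; \tfrac{1}{2}\sum_j q'_{i,j} \;+\; \Bigl(1-\tfrac{f'_i}{2}\Bigr)\frac{\sum_j c_j}{\bar{n}} \;=\; \tfrac{f'_i}{2} + \Bigl(1-\tfrac{f'_i}{2}\Bigr) \;=\; 1,
\]
using $\sum_j q'_{i,j}=f'_i$ and $\sum_j c_j=\bar{n}$. Summing the residual supplies $c''_j=c_j-\sum_{i\in N'} p'_{i,j}$ over $j$ telescopes to $\sum_j c''_j=\bar{n}-n'=n''$, which matches the precondition of the recursive call. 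The delicate part is showing $c''_j\ge 0$ for every $j$. Rearranging the definition gives
\[
c''_j \;=\; \frac{c_j\,(n''+F/2)}{\bar{n}} \;-\; \tfrac{1}{2}\sum_{i\in N'} q'_{i,j},\qquad F\;=\;\sum_{i\in N'} f'_i,
\]
and combining the supply bound $\sum_i q'_{i,j}\le c_j$ from PA's feasibility with $f'_i\ge 1/e$ from Theorem~\ref{thm:CGG13} reduces the required inequality to $F\ge n'-n''\in\{0,1\}$. The case $n'=n''$ (even $\bar{n}$) is automatic, while the case $n'-n''=1$ (odd $\bar{n}$) follows from $F\ge n'/e$ whenever $n'\ge 3$. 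This is precisely where the threshold $n_0\ge 4$ is used: for odd $\bar{n}\ge 4$ we have $n'=\lceil\bar{n}/2\rceil\ge 3$. Feasibility of the combined output matrix then follows from the inductive hypothesis applied to the recursive call, since by construction $\sum_{i\in N'}p'_{i,j}+c''_j=c_j$.

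For truthfulness, condition on the random subset $N'$, which is drawn independently of every agent's report. If $i\notin N'$, then her reported values influence neither the PA computation nor the residual supplies $c''_j$, so truthful reporting is optimal by the inductive hypothesis applied to the recursive call. If $i\in N'$, her expected utility evaluated at her true values equals
\[
U_i(b_i) \;=\; O_i \;+\; \tfrac{1}{2}\,f'_i(b)\,\bigl(u'_i(b)-O_i\bigr),
\]
where $u'_i(b)=\sum_j v_{i,j}\,\nsw_{i,j}(b)$, $O_i=\tfrac{1}{\bar{n}}\sum_j v_{i,j}c_j$ is her true outside-option value, and $f'_i(b)$ is the PA scaling factor. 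Since $O_i$ does not depend on $b_i$, maximizing $U_i$ reduces to maximizing $f'_i(b)\,(u'_i(b)-O_i)$. Substituting the definition of $f'_i$ shows that this quantity equals, up to a factor independent of $b_i$, the shifted NSW objective $\prod_{k\in N'}(u'_k(b)-O_k)$ evaluated at the mechanism's NSW allocation; since the true shifted NSW maximizer coincides with $\nsw(v)$, truthful reporting attains the maximum. This is the outside-option analogue of the argument establishing Theorem~\ref{thm:CGG13}. The main anticipated obstacle is the residual-supply calculation above, where both the $1/e$ lower bound on $f'_i$ and the threshold $n_0\ge 4$ are used tightly, along with the bookkeeping needed to adapt Theorem~\ref{thm:CGG13} from unit supplies and trivial outside option to arbitrary supplies $c_j$ with a uniform-random outside option.
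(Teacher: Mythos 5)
Your proof is correct and follows essentially the same approach as the paper: induction on $\bar{n}$ for feasibility, with the key step being that the supply of each item covers the sampled agents' allocations via the $f'_i \ge 1/e$ bound and the threshold $n_0 \ge 4$, and truthfulness deduced by conditioning on the realized sample and invoking PA's truthfulness. Your treatment is somewhat more explicit than the paper's --- you write out the exact per-item residual $c''_j = \frac{c_j(n''+F/2)}{\bar{n}} - \frac{1}{2}\sum_{i\in N'}\pa'_{i,j}$ and reduce feasibility to $F \ge n'-n''$, and you derive the closed form $U_i = O_i + \frac{1}{2}f'_i(u'_i - O_i)$ before appealing to the PA argument --- but this is added precision on the same argument, not a different route.
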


\begin{proof}
Feasibility is proved by induction on the recursive definition of the
mechanism.  The inductive hypothesis is that the fractional allocation
on $\bar{n}$ agents with supplies $c_1,\ldots,c_n$ such that $\sum_j c_j = \bar{n}$
has a total fractional allocation to each agent of one, i.e., $\sum_j
\alloc_{i,j} = 1$ for each $i$, and a total fractional allocation of
each item equal to its supply, i.e., $\sum_i \alloc_{i,j} = c_j$ for
each $j$.  The base case of $\bar{n} < n_0$ clearly satisfies the inductive
hypothesis.  For the inductive step, the key point to argue is that
the supply $c_j$ of each item $j$ is sufficient to cover
the allocation to the sampled agents. 

This can be seen as follows.  The $\ceil{\bar{n}/2}$ sampled agents are
allocated half of their PA assignment on the supplied capacities. 
Since $f'_i \ge \frac 1e$ for all $i$ by Theorem~\ref{thm:CGG13}, this
means that the amount of each agent's half PA assignment is 
$\sum_j \pa'_{i,j}/2 = f'_i/2 \geq 1/(2e)$. To ensure that each agent gets
exactly one item in expectation, Step 4 pads this allocation with a uniformly
random assignment, which may thus require up to $1-1/(2e)$ units for each of
the $\ceil{\bar{n}/2}$ sampled agents, for a total of $(1- 1/(2e)) \ceil{\bar{n}/2}$. 
But, since the full PA assignment allocated no more than $c_j$ of each item $j$,
the half PA assignment set aside at least $c_j/2$ of each item, leading to a total
of $\sum_j c_j/2 = \bar{n}/2$. We conclude by observing that $(1- 1/(2e)) \ceil{\bar{n}/2}$ 
is at most $\bar{n}/2$ when $\bar{n} \geq n_0 = 4$, so the amount set aside from each
item is sufficient to cover the sampled agents' allocation.



Truthfulness follows by considering each agent conditioned on the
state of the mechanism during the recursive step where that agent is
selected in the sampled set $N'$.  The agent's report plays no role in
determining the state at this point.  Given the state, the outcome for
this agent is fully determined by the PA mechanism
which is truthful.  Thus, the mechanism is truthful.
\end{proof}

To bound the per-agent utility of the RPI
mechanism, we analyze the contribution to the utility of an agent who
is sampled in the outermost recursive call of the mechanism.  An agent
is sampled as such with probability at least one half, and otherwise
the agent's utility is at least zero.  The utility of these sampled
agents is easily compared to the utility of the PA mechanism (without
the agents that are not sampled).  An issue significantly complicating
the analysis of the approximation is the fact that we need to compare
the utility of an agent sampled in this invocation of the PA mechanism
with their utility in the Nash bargaining solution on the full set of agents.  
Counterintuitively, it is not true that these agents are always better off without the competition from the
agents that are not sampled: there are instances where removing some
of the competition, in fact, lowers the utility of an agent.

In Section~\ref{s:monotonicity} we define the {\em
  $\utilapprox$-utility monotonicity for NSW} to be the maximum
non-monotonicity of utility of any agent and sets of agents $N$ and
subset $N'$ with NSW maximizing solutions $p$ and $p'$ respectively:
\[ \utilapprox :=\max_{N'\subseteq N}\left\{\max_{i\in N'}\left\{\frac{\sum_j v_{i,j}p_{i,j}}{\sum_j v_{i,j}p'_{i,j}}\right\}\right\}.\]
This parameter quantifies the extent to which some agent may be worse
off in the NSW maximizing solution after the removal of some subset of agents.
Defining the worst-case value of $\utilapprox$ across instances and
subsets as $\wcutilapprox$, Section~\ref{s:monotonicity} bounds
$\wcutilapprox$ between $\Omega(2^{\sqrt{\log n}/2})$ and
{\UpperBoundShort}, the latter of which is $o(n^{\epsilon})$ for any
constant $\epsilon>0$. It is worth noting that we ran experiments 
on a large set of instances and found that this value was actually no more 
than 1 in all of these instances. 

\begin{theorem}
The randomized partial improvement mechanism with $n_0 = 4$ on $n$
unit-demand agents and $n$ unit-supply items is a $4\,e\,\utilapprox$ 
approximation to the Nash bargaining solution with
disagreement point given by the uniform random assignment.
\end{theorem}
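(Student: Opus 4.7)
The plan is to fix an arbitrary agent $i$ and show that her expected utility under RPI is at least $u_i^{*}/(4e\utilapprox)$, where $u_i^{*} := \sum_j v_{i,j}\,p^{*}_{i,j}$ is her utility in the Nash bargaining benchmark. I would condition on the event $E$ that $i$ lies in the sampled set $N'$ during the outermost recursive invocation of RPI; since $N'$ is a uniformly random subset of size $\lceil n/2 \rceil$ out of $n$ agents, $\Pr[E] \geq 1/2$. Off of $E$ I simply lower-bound $i$'s utility by zero, which suffices.

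Conditioned on $E$, I would unpack Step~4 of the mechanism. At the outermost call we have $c_j = 1$ for all $j$ and $\bar n = n$, so the outside option fed to PA is $o'_i = (1/n)\sum_j v_{i,j} = o_i$, exactly the disagreement point of the benchmark. Let $p^{NB}$ denote the NSW maximizer on the sample $N'$ with disagreement point $o_i$, and write $u_i^{NB}(N') := \sum_j v_{i,j}\,p^{NB}_{i,j}$. Since PA scales this assignment by a factor $f'_i \geq 1/e$ (Theorem~\ref{thm:CGG13}), Step~4 delivers to $i$ the utility $(f'_i/2)\,u_i^{NB}(N') + (1 - f'_i/2)\,o_i$, which is at least $u_i^{NB}(N')/(2e)$ after dropping the nonnegative padding term.

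The remaining step is to compare $u_i^{NB}(N')$ with $u_i^{*}$ via population monotonicity. Nash bargaining with disagreement point $o_i$ is NSW on the shifted values $v_{i,j} - o_i$, and the shift depends only on $i$'s own values, so it is identical on the full market $N$ and on the sample $N'$. The $\utilapprox$-utility monotonicity of NSW from Section~\ref{s:monotonicity} therefore yields $u_i^{NB}(N') - o_i \geq (u_i^{*} - o_i)/\utilapprox$; combined with $\utilapprox \geq 1$ and the individual rationality $u_i^{*} \geq o_i$ built into the Nash bargaining program, a one-line rearrangement produces $u_i^{NB}(N') \geq u_i^{*}/\utilapprox$. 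Chaining this with the displayed lower bound and averaging against $\Pr[E] \geq 1/2$ gives $\mathbb{E}[u_i^{\mathrm{RPI}}] \geq u_i^{*}/(4e\utilapprox)$, matching the claimed $4e\utilapprox$ per-agent approximation.

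The main obstacle I anticipate is the monotonicity step. The parameter $\utilapprox$, defined for NSW, must be applied to the \emph{shifted} NSW problem corresponding to Nash bargaining with a fixed outside option, and the two markets being compared must share the same disagreement point $o_i$. The latter hinges on the mechanism's careful definition of $o'_i$ via $\bar n$ and the remaining supply, which at the outermost call reduces exactly to $o_i$; without this coincidence, the comparison between $u_i^{NB}(N')$ and $u_i^{*}$ would not be a direct invocation of population monotonicity. Everything else reduces to unpacking Step~4 and invoking $f'_i \geq 1/e$; the recursive call on the unsampled half of the market is needed only for feasibility (already established), not for this per-agent utility lower bound.
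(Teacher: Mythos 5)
Your argument is correct and follows essentially the same route as the paper: condition on the agent being sampled in the outermost call (probability at least $1/2$), apply the PA guarantee $f'_i \geq 1/e$ to extract a $1/(2e)$ factor of $u_i^{NB}(N')$, and then invoke $\utilapprox$-utility monotonicity to relate $u_i^{NB}(N')$ back to $u_i^*$. You are in fact more careful than the paper's write-up on two points that are worth keeping: the exact form of the Step~4 utility $(f'_i/2)\,u_i^{NB}(N') + (1 - f'_i/2)\,o_i$ (the paper loosely says ``half the utility from the outside option,'' even though the padding coefficient is $1 - f'_i/2$, not $1/2$), and the fact that $\utilapprox$-monotonicity is applied to the \emph{shifted} values $v_{i,j} - o_i$, with the translation back to raw utilities via $\utilapprox \geq 1$ and $u_i^* \geq o_i$.

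One small gap: you handle only $n \geq n_0 = 4$, where a sample actually exists, but the theorem covers all $n$. For $n < n_0$ RPI skips directly to the base case and returns the uniform assignment, so each agent $i$ receives exactly $o_i$. Since $n \leq 3$ implies $u_i^* \leq \max_j v_{i,j} \leq n\,o_i \leq 3 o_i$, this is a $3$-approximation, which is (crudely) below $4e\utilapprox$. Adding this two-line observation completes the proof.
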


\begin{proof}
If $n < n_0 = 4$ then the base case of RPI is invoked and a uniform
assignment is returned.  With $n<4$, however, this assignment is a 
$3 < 4 \, e \, \utilapprox$ approximation, as each agent obtains 1/3 
of each item.

Otherwise, we analyze the contribution to the utility of an agent
conditioned on the agent being sampled in the first recursive call of
the algorithm. This event happens with probability at least $1/2$.  
When this happens the utility of the agent is half the utility of PA 
on the sampled agents plus half the utility from the outside option.  The
$\utilapprox$-utility monotonicity property implies that the utility
of an agent in the NSW maximizing outcome on the sample is a $\utilapprox$ 
approximation to the same agent's utility in the NSW maximizing outcome 
on the full set of agents. Running PA guarantees an $e$ fraction of this 
utility. Combining these steps and the fact that agents who are not sampled
in the first recursive call still receive nonnegative utility, we obtain a 
$4\,e\,\utilapprox$ approximation.
\end{proof}

Combined with Theorem~\ref{thm::upper-bdd-on-rho} from Section~\ref{s:monotonicity} which bounds $\wcutilapprox$ by \UpperBoundShort, we have the following.

\begin{cor}
The randomized partial improvement mechanism with $n_0 = 4$ on $n$
unit-demand agents and $n$ unit-supply items guarantees an approximation
of the Nash bargaining solution with uniform outside option with a factor \UpperBound.
\end{cor}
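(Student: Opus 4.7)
The plan is to directly combine the two preceding results. The theorem proved immediately above shows that the randomized partial improvement mechanism with $n_0=4$ achieves a per-agent approximation factor of $4\,e\,\utilapprox$ relative to the Nash bargaining solution with uniform disagreement point, where $\utilapprox$ is the utility non-monotonicity of the particular instance at hand. Since $\utilapprox\leq\wcutilapprox$ on every instance by definition of the worst case, and Theorem~\ref{thm::upper-bdd-on-rho} (from Section~\ref{s:monotonicity}) establishes $\wcutilapprox=O(2^{2\sqrt{\log n}})$, substituting gives an overall approximation factor of $4\,e\cdot O(2^{2\sqrt{\log n}})=O(2^{2\sqrt{\log n}})$; the leading constant $4e$ is absorbed into the $O(\cdot)$.

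For the second half of the claimed bound, I would verify that $O(2^{2\sqrt{\log n}})\subseteq o(n^{\epsilon})$ for every constant $\epsilon>0$. Taking logarithms reduces this to comparing $2\sqrt{\log n}$ with $\epsilon\log n$; the ratio $2\sqrt{\log n}/(\epsilon\log n)=2/(\epsilon\sqrt{\log n})$ tends to $0$ as $n\to\infty$, so the former is strictly dominated by the latter. Combining these two observations yields the corollary. There is no real obstacle here: the argument is a syntactic composition of the preceding theorem with the population non-monotonicity upper bound, plus the routine asymptotic comparison. The only conceptual content is that the per-instance parameter $\utilapprox$ appearing in the approximation guarantee can be bounded uniformly in $n$ by the worst-case quantity $\wcutilapprox$ handled in Section~\ref{s:monotonicity}, which is valid precisely because the bound from Section~\ref{s:monotonicity} is taken over all instances and all choices of removed subset.
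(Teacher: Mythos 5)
Your proposal is correct and matches the paper's approach exactly: the corollary is obtained by combining the $4\,e\,\utilapprox$ per-agent guarantee with the Theorem~\ref{thm::upper-bdd-on-rho} bound $\wcutilapprox = O(2^{2\sqrt{\log n}})$, absorbing the constant $4e$. The additional asymptotic check that $O(2^{2\sqrt{\log n}}) \subseteq o(n^\epsilon)$ is routine and also correct.
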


\section{Approximate Utility Monotonicity}\label{s:monotonicity}

A factor significantly complicating the analysis of the approximation
of the random partial improvement mechanism is the fact that the
benchmark is computed based on the Nash social welfare maximizing
solution when all agents in $N$ are present, while the mechanism's performance  depends on the solution for $N'$, the sampled agents.  The NSW
maximizing solution for $N$ and $N'$ can generally be quite different.
Moreover, as it turns out, there are instances where the utilities of
some agents in the NSW maximizing solution are non-monotone with respect to
removal of other agents, i.e., there exist instances that exhibit positive
externalities between agents.  Table~\ref{tbl::simple-ex-explained}
gives a simple example of such an instance (discussed in detail later on) and the 
remainder of the section develops upper and lower bounds on the worst-case 
non-monotonicity of utility.

\newcommand{\colspace}{0.8cm}
\begin{table*}[t]
\centering
\begin{tabular}{C{\colspace} C{\colspace} C{\colspace} C{\colspace} C{\colspace} C{\colspace} C{\colspace} C{\colspace} C{\colspace} C{\colspace} C{\colspace} C{\colspace}}
&$A$&$B$&$C$&~~~~& $A$ & $B$ & $C$ &~~~~& $A$ & $B$ & $C$ \\ 
\cline{2-4}      \cline{6-8}              \cline{10-12}
$a$ & 1 & 2 & 0  && 1 & 0 & 0           && $1/2$ & $1/2$ & 0     \\
$b$ & 0 & 2 & 1  && 0 & 1 & 0           && 0     & $1/2$ & $1/2$ \\
$c$ & 0 & 0 & 1  && 0 & 0 & 1                                    \\
\\
&\multicolumn{3}{c}{i. Agent valuations} && \multicolumn{3}{c}{ii. Initial solution} && \multicolumn{3}{c}{iii. Final solution} 
\end{tabular}
\caption{\label{tbl::simple-ex-explained} A simple instance involving
  three agents; $a,b,c$; and three items; $A,B,C$. The value of
  agent $b$ in the Nash bargaining solution after the removal of agent
  $c$ drops by a factor of $4/3$.  The optimality of solutions (ii)
  and (iii) for Nash social welfare are intuitive; however, formal justification is
  given in Section~\ref{subsec:LowerBound} where this example is
  revisited.}
\end{table*}



%
\begin{definition}\label{def:rho}
A matching environment on agents $N$ is {\em $\utilapprox$-utility
  monotone} if for any subset $N'$ of $N$ and any $i \in N'$ the
utility of $i$ in the NSW maximizing assignment, $p'$, 
for $N'$ is at least a $\utilapprox$ approximation to the NSW maximizing
assignment, $p$, for $N$:
\[ \utilapprox :=\max_{N'\subseteq N}\left\{\max_{i\in N'}\left\{\frac{\sum_j v_{i,j}p_{i,j}}{\sum_j v_{i,j}p'_{i,j}}\right\}\right\}.\]
\end{definition}  

This parameter quantifies the extent to which some agent may be worse
off in the NSW solution after the removal of some subset of agents. We
let $\rho^*$, denote the worst case value of $\rho$ across instances;
this value is known as the \emph{opportunity structure} of the Nash
bargaining solution for this class of instances~\citep{TL-89}.  In
Section~\ref{subsec:UpperBound} we prove an upper bound of
\UpperBoundShort, which is $o(n^\epsilon)$ for any constant $\epsilon
> 0$, for the value of $\rho^*$ over all one-sided matching instances,
and in Section~\ref{subsec:LowerBound} we complement this result by
proving a lower bound of $\Omega(2^{\sqrt{\log n}/2})$ for this value.

\subsection{Upper Bound.}\label{subsec:UpperBound}
Given a valuation matrix $v$ and a random matching $p$, we henceforth use $u_i(p)$ to denote the expected
utility of agent $i$ for $p$ given $v$, i.e., $\sum_{j\in M}v_{i,j}p_{i,j}$ (similarly, we use $u'_i(p)$
for valuation matrix $v'$). Note that, as we discussed in Section~\ref{s:benchmark}, the Nash bargaining
solution is scale invariant. Therefore, if we scale the valuations of each agent $i$ by some constant $c_i>0$,
then the Nash bargaining solution with respect to valuations $c_i v_{i,j}$ instead of $v_{i,j}$ will remain
the same. This means that given some problem instance that yields a doubly stochastic matrix $p$ as its Nash
bargaining solution, we can always ``normalize'' the valuations of the agents so that every agent's expected 
utility for $p$ is equal to 1, and $p$ remains that Nash bargaining solution of the normalized instance. This 
is a convenient normalization that we make use of below. 

In order to prove the upper bound on $\rho^*$, we first prove the following very useful lemmata.

\begin{lemma}\label{lem:value_dif}
Let $p$ be a NSW maximizing solution, and $v$ be the valuations normalized so that for every agent $i$, 
$u_i(p)=1$. Then, if some agent $i$ is allocated an item $j$ with positive probability, i.e., $p_{i,j}>0$, 
every other agent $k\neq i$ must have $v_{k,j} \leq v_{i,j}+1$. Equivalently, $v_{i,j}\geq \max_{k\in N} \{v_{k,j}\}-1$.
\end{lemma}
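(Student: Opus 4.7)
The plan is to argue by a standard swap argument exploiting the first-order optimality of the NSW maximizer. Suppose for contradiction that $p_{i,j} > 0$ but $v_{k,j} > v_{i,j} + 1$ for some other agent $k$. Since agent $k$'s row in $p$ is a probability distribution summing to $1$, there exists at least one item $j'$ with $p_{k,j'} > 0$. The idea is to consider the two-agent/two-item ``rotation'' that transfers $\epsilon$ mass of item $j$ from $i$ to $k$ and $\epsilon$ mass of item $j'$ from $k$ to $i$. Because $p_{i,j}>0$ and $p_{k,j'}>0$, this perturbation is feasible for all sufficiently small $\epsilon>0$ and keeps the matrix doubly stochastic.

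Under this perturbation, the utility of every agent other than $i$ and $k$ is unchanged, while $u_i$ changes by $\epsilon(v_{i,j'}-v_{i,j})$ and $u_k$ changes by $\epsilon(v_{k,j}-v_{k,j'})$. Optimality of $p$ for the Nash social welfare $\sum_\ell \log u_\ell(p)$ implies that the derivative at $\epsilon=0$ is non-positive, so
\[
\frac{v_{i,j'}-v_{i,j}}{u_i(p)}+\frac{v_{k,j}-v_{k,j'}}{u_k(p)}\ \leq\ 0.
\]
By the normalization $u_i(p)=u_k(p)=1$, this simplifies to the pointwise inequality $v_{k,j}-v_{i,j}\leq v_{k,j'}-v_{i,j'}$, which holds for \emph{every} $j'$ in the support of row $k$.

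To convert this pointwise bound into the desired uniform bound of $1$, I would take the convex combination of these inequalities with weights $p_{k,j'}$ (which sum to $1$ over the support). This yields
\[
v_{k,j}-v_{i,j}\ \leq\ \sum_{j'} p_{k,j'}\bigl(v_{k,j'}-v_{i,j'}\bigr)\ =\ u_k(p)-\sum_{j'} p_{k,j'}\,v_{i,j'}\ \leq\ 1,
\]
where the last step uses $u_k(p)=1$ together with the nonnegativity of values and probabilities. This contradicts the assumed strict inequality $v_{k,j} > v_{i,j}+1$ and establishes the lemma; the equivalent form $v_{i,j}\geq \max_k v_{k,j}-1$ is immediate.

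The only delicate step is justifying the feasibility of the perturbation and the first-order argument, which requires picking $j'$ in the support of $p_k$ and verifying that no capacity or row/column constraint becomes binding at the perturbed point for sufficiently small $\epsilon$; this is routine since we perturb strictly interior entries. I do not expect a genuinely hard obstacle: the real content is recognizing that the NSW rotation bound $v_{k,j}-v_{i,j}\leq v_{k,j'}-v_{i,j'}$ can be averaged against $p_{k,\cdot}$ to upper bound the right side by $u_k(p)$.
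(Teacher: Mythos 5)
Your proof is correct, and it is the same swap-at-the-optimum idea as the paper's, but organized differently. The paper exploits the normalization $u_k(p)=1$ to \emph{select} a single support item $\ell$ with $v_{k,\ell}\le 1$, then explicitly computes a lower bound on $\mathrm{NSW}(p')$ for the single $\{i,k\}\times\{j,\ell\}$ rotation and shows it exceeds $1$ for small $\epsilon$; this is an elementary perturbation calculation that never appeals to derivatives or KKT. You instead write down the first-order inequality $v_{k,j}-v_{i,j}\le v_{k,j'}-v_{i,j'}$ for \emph{every} $j'$ in the support of row $k$, and then average against $p_{k,\cdot}$ to bring in $u_k(p)=1$; non-negativity of values finishes. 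These are two faces of the same observation (the averaging step plays exactly the role of the paper's choice of $\ell$), but your version is a bit cleaner in that it does not require exhibiting a specific witness item, only the first-order conditions and a convex combination. One tiny point worth making explicit in a write-up: both arguments use $v_{i,j'}\ge 0$ (your last inequality and the paper's bound $u_i(p')\ge 1-\epsilon v_{i,j}$), so non-negativity of the normalized values is being assumed; and in your averaging step the term $j'=j$ is harmless since the corresponding inequality holds with equality.
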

\begin{proof}
For contradiction, assume that there exist two agents $k$ and $i$ and an item $j$ such that 
$p_{i,j}>0$ and $v_{k,j} = v_{i,j}+1+\delta$ for some $\delta>0$. Since the expected utility of agent $k$ 
is 1, there must also exist some item $\ell$ with $p_{k,\ell}>0$ and $v_{k,\ell}\leq 1$ (otherwise the 
expected utility of agent $k$ would be greater than 1). Note that $\ell\neq j$, since $v_{k,j} = v_{i,j}+1+\delta>1$,
whereas $v_{k,\ell}\leq 1$. 

Let $p'$ be a probability distribution identical to $p$, 
except $p'_{k,j}=p_{k,j}+\epsilon$, $p'_{i,j}=p_{i,j}-\epsilon$, $p'_{k,\ell}=p_{k,\ell}-\epsilon$,
and $p'_{i,\ell}=p_{i,\ell}+\epsilon$, for some positive $\epsilon <\min\{p_{i,j}, p_{k,\ell}\}$,
whose exact value we choose later on.
In other words, $p'$ swaps probability $\epsilon$ between agents $i, k$ and items $j, \ell$.
The expected utility of agent $k$ in $p'$ is 
\begin{align*}
u_k(p') &~=~ 1+\epsilon v_{k, j}-\epsilon v_{k, \ell} \\
& ~\geq~ 1+\epsilon(v_{i,j}+1+\delta) -\epsilon \\
& ~=~ 1+ \epsilon v_{i,j}+\epsilon\delta,
\end{align*}
\noindent and the expected utility of agent $i$ in $p'$ is
\begin{align*}
u_i(p') &~=~ 1+\epsilon v_{i, \ell}-\epsilon v_{i, j} \\
& ~\geq~ 1-\epsilon v_{i, j}.
\end{align*}
Since every other agent's expected utility is the same in $p'$ and $p$ (equal to 1), the NSW of $p'$ is
\begin{align*}
\prod_{i\in N} u_i(p') & ~\geq~ (1+ \epsilon v_{i,j}+\epsilon\delta)(1-\epsilon v_{i, j})\\
&  ~=~ 1 + \epsilon(\delta - \epsilon(v_{i,j}^2+\delta v_{i,j})).
\end{align*} 
Therefore, if we let $\epsilon < \delta/(v_{i,j}^2+\delta v_{i,j}))$, the NSW of $p'$ is greater than 1, which is the NSW of $p$,
contradicting the fact that $p$ is a NSW maximizing solution.
\end{proof}

\begin{lemma}\label{lem:ratio_change}
Given a problem instance, let $p$ and $\overline{p}$ be the NSW maximizing outcomes before and after (respectively) 
some subset of the agents has been removed. If among the remaining agents there exists a set of agents $N_1$ and a constant 
$d\geq 12$ such that every agent $i\in N_1$ has $u_i(\overline{p})\leq u_i(p)/d$, then there 
also exists a larger set $N_2$ of remaining agents such that $|N_2|\geq d|N_1|/3$ and for all agents $k\in N_2$ we have 
$u_k(\overline{p})\leq  4 u_k(p)/d$.
\end{lemma}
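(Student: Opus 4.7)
The plan is to argue by contradiction, assuming $|N_2| < d|N_1|/3$. Writing $\bar N$ for the set of remaining agents, I would first rescale valuations so that $u_i(p)=1$ for every $i\in N$ (using the scale invariance of NSW discussed in Section~\ref{s:benchmark}); under this normalization the hypothesis becomes $u_i(\overline p)\leq 1/d$ for $i\in N_1$, and Lemma~\ref{lem:value_dif} gives $v_{i,j}\geq V_j-1$ whenever $p_{i,j}>0$, where $V_j=\max_k v_{k,j}$. Summing this inequality weighted by $p_{i,\cdot}$ also yields the bound $\sum_j V_j\,p_{i,j}\leq 2$ for every $i$. The target is to exhibit a set of items whose $\overline p$-allocation must lie entirely inside $N_2$ and whose cardinality exceeds $|N_2|$, producing the contradiction.

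Two ingredients drive the argument. First, $p$ restricted to $\bar N$ is a feasible allocation for the reduced market on which every remaining agent has utility $1$; NSW optimality of $\overline p$ along the perturbation $(1-\lambda)\overline p+\lambda\,p|_{\bar N}$ at $\lambda=0$ yields the first-order inequality $\sum_{k\in\bar N} 1/u_k(\overline p)\leq|\bar N|$. Second, for each $i\in N_1$ and each item $j$ with $p_{i,j}>0$ and $V_j\geq 2$, I would analyse where $j$ ends up in $\overline p$ via a swap-cycle FOC: for any $k\in\bar N$ with $\overline p_{k,j}>0$ and any $\ell$ in the support of $\overline p_i$, optimality of $\overline p$ against the $(i,k,j,\ell)$-cycle swap gives $(v_{i,j}-v_{i,\ell})/u_i(\overline p)\leq (v_{k,j}-v_{k,\ell})/u_k(\overline p)$; averaging over $\ell$ with weights $\overline p_{i,\ell}$ and using $u_i(\overline p)\leq 1/d$, $v_{i,j}\geq V_j-1$, and $v_{k,j}\leq V_j$ forces $u_k(\overline p)\leq V_j/[d(V_j-1)-1]$, which for $d\geq 12$ and $V_j\geq 2$ is strictly below $4/d$. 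So any such receiver $k$ lies in $N_2$, and the $\overline p$-mass of every item in $J^*:=\{j:V_j\geq 2 \text{ and } p_{i,j}>0 \text{ for some } i\in N_1\}$ is absorbed entirely by agents of $N_2$.

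The counting step would then close the argument: on the one hand $|J^*|=\sum_{j\in J^*}\sum_{k\in N_2}\overline p_{k,j}\leq |N_2|$ by capacity, so it suffices to show $|J^*|\geq d|N_1|/3$. On the other hand, each $i\in N_1$ has $u_i(p)-u_i(\overline p)\geq 1-1/d$, while her $\overline p$-allocation on the high-$V_j$ part of the support of $p_i$ is at most $1/d$ (else her utility there would already exceed $1/d$). Combining this loss with the concentration bound $\sum_j V_j\,p_{i,j}\leq 2$, which prevents $p_i$ from placing too much mass on too few high-$V_j$ items, should yield a pigeonhole-style lower bound of at least $d|N_1|/3$ on $|J^*|$, contradicting $|N_2|<d|N_1|/3$.

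The hardest part will be this final counting argument. An adversary is free to place a portion of each $N_1$ agent's $p$-utility on items with $V_j$ only slightly above $1$, where the swap bound above does not yet force the receiver into $N_2$; I therefore expect to need a careful dichotomy that either extends the swap bound for intermediate values of $V_j$ (forcing those receivers into $N_2$ as well) or shows that items with $V_j$ near $1$ contribute a strictly bounded amount to the total utility loss, so that the bulk of the $1-1/d$ drop is necessarily absorbed by items in $J^*$. Tuning the thresholds so that the precise constants $d\geq 12$ and $d|N_1|/3$ emerge cleanly is where the bulk of the technical work should concentrate.
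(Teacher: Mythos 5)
Your setup (normalize so $u_i(p)=1$, invoke Lemma~\ref{lem:value_dif} on $p$, identify the receivers in $\overline p$ of items lost by $N_1$ agents, and bound their $\overline p$-utility) is the right spirit, and the swap-cycle first-order inequality is a legitimate alternative to what the paper does to get the low-utility bound (the paper applies Lemma~\ref{lem:value_dif} a second time, to $\overline p$ with a second normalization $v'$ in which every $u'_k(\overline p)=1$). But the counting step you describe cannot work, and this is a genuine gap rather than a tuning issue. You want to lower-bound the number of items $|J^*|$ and pass this through $|J^*|\le|N_2|$, but $|J^*|\ge d|N_1|/3$ is simply false: take $|N_1|=1$ with $p$ giving that agent a single item of value $1$; then $|J^*|=1$, yet the lemma (and the truth) demands $|N_2|\ge d/3$. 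The reason $|N_2|$ is forced to be large in that example is that each receiver can absorb only a tiny fraction of the one high-value item, so counting \emph{items} loses exactly the information that matters.

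The ingredient you are missing is the second normalization: rescale to $v'$ so that $u'_k(\overline p)=1$ for all remaining $k$, which makes $v'_{i,j}\ge d\,v_{i,j}$ for every $i\in N_1$. Then one counts the total $v'$-value of the fractions lost by $N_1$ agents on their high-value items (where ``high'' is $v_{i,j}\ge 1/2$, not $V_j\ge 2$; the stricter threshold is the source of the intermediate-range trouble you flag in your last paragraph, and the paper's fix is precisely your ``second alternative'': the low-$v_{i,j}$ items collectively contribute at most $1/2$ to an agent's $p$-utility because $\sum_j p_{i,j}=1$, so at least $1/2-1/d$ of each $N_1$ agent's loss sits on the high part). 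Lemma~\ref{lem:value_dif} applied to $\overline p$ with $v'$ gives $v'_{k,j}\ge v'_{i,j}-1\ge d\,v_{i,j}-1$ for any receiver $k$, so these lost fractions carry at least $\tfrac{d}{3}|N_1|$ units of $v'$-value in aggregate; since every $k\in N_2$ has $u'_k(\overline p)=1$, pigeonholing on \emph{value} (not items) yields $|N_2|\ge \tfrac{d}{3}|N_1|$. Without the $v'$ viewpoint there is no quantity whose per-agent capacity is $1$ to pigeonhole against, and the argument as you have structured it cannot be closed.
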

\begin{proof}
Without loss of generality, let $v$ be the agent valuations normalized so that $u_i(p)=1$ for every agent $i$, 
and $v'$ be the valuations normalized so that $u'_i(\overline{p})=1$. Given the $v_{i,j}$ values that yield
$u_i(p)=1$, we can get the $v'_{i,j}$ values that yield $u'_i(\overline{p})=1$ using the simple formula
$v'_{i,j} = v_{i,j} \cdot\frac{u_i(p)}{u_i(\overline{p})}$. In other words, for each agent $i$ who is worse-off 
in $\overline{p}$ compared to $p$, i.e., $u_i(\overline{p})< u_i(p)$, we scale all of that agent's item values
up by the same factor, $u_i(p)/u_i(\overline{p})$. In particular, for each agent $i\in N_1$ this means that 
$v'_{i,j}\geq d v_{i,j}$ for every item $j$.

%

For every $i\in N_1$ we know that the drop in that agent's value with respect to the original valuations is 
$u_i(p)-u_i(\overline{p})\geq u_i(p)(1-1/d) = 1-1/d$.
In order to account for that drop, we partition the set of items of which $i$ is allocated more in $p$ compared to $\overline{p}$ 
into two sets depending on whether $v_{i,j}\geq 0.5$
or not: $M_i^h=\{j\in M: p_{i,j}> \overline{p}_{i,j} \text{ and }  v_{i,j}\geq 0.5\}$ and 
$M_i^{\ell}=\{j\in M: p_{i,j}> \overline{p}_{i,j} \text{ and }  v_{i,j} < 0.5\}$.
We first show that from the aforementioned $1-1/d$ drop in value, no more than $0.5$ could be due to the items in $M_i^{\ell}$, since 
\begin{equation*}
\sum_{j\in M_i^{\ell}} (p_{i,j}-\overline{p}_{i,j})v_{i,j} ~<~ 0.5\sum_{j\in M_i^{\ell}} (p_{i,j}-\overline{p}_{i,j})  ~\leq~ 0.5.
\end{equation*}
Therefore, at least $0.5-1/d$ of this drop in value for each agent $i\in N_1$ is due to items in $M_i^h$. Summing this up
over all the agents in $N_1$, we get
\begin{align}\label{eq:drop_bound}
\sum_{i\in N_1}\sum_{j\in M_i^h} (p_{i,j}-\overline{p}_{i,j})v_{i,j} & ~\geq~ \sum_{i\in N_1} \left(0.5-\frac{1}{d}\right) \\
& ~=~ |N_1|\left(0.5-\frac{1}{d}\right).
\end{align}

Let $N_2=\{k\in N: \overline{p}_{k,j}>0 \text{ for some } j\in \bigcup_{i\in N_1} M_i^h\}$ be the set of agents that
are allocated with positive probability in $\overline{p}$ an item from $M_i^h$ for some $i\in N_1$. Using 
Lemma~\ref{lem:value_dif} we get that for every item $j\in M_i^h$, if $\overline{p}_{k,j}>0$ then $v_{k,j}\leq v_{i,j}+1$
and $v'_{k,j}\geq v'_{i,j}-1$. Using the fact that $v'_{i,j} \geq d v_{i,j}$ for every $i\in N_1$, shown above, the latter
inequality also implies that $v'_{k,j} \geq d v_{i,j}-1$. Therefore 
\[\frac{v'_{k,j}}{v_{k,j}} ~\geq~ \frac{d v_{i,j}-1}{v_{i,j}+1}  ~\geq~ d - \frac{d+1}{1.5} ~\geq~ \frac{d}{4},\]
where the second inequality uses the fact that $v_{i,j}\geq 0.5$ and the last inequality uses the fact that $d\geq 8$.
This implies that for every $k\in N_2$
\begin{align*}u_k(\overline{p}) &~=~ \sum_{j\in M} \overline{p}_{k,j} v_{k,j}\\ 
&~\leq~ \sum_{j\in M} \overline{p}_{k,j} \frac{4}{d}v'_{k,j}\\
&~=~ \frac{4}{d}\sum_{j\in M} \overline{p}_{k,j} v'_{k,j}\\
&~=~  \frac{4 u'_k(p)}{d} \\
& ~=~ \frac{4 u_k(p)}{d}, 
\end{align*}
\noindent where the last equation uses the fact that $u'_k(p)=u_k(p)=1$ according to our normalization.

Since we have shown that for all $k\in N_2$ we have $u_k(\overline{p})\leq  4 u_k(p)/d$,
it now suffices to show that the size of $N_2$ is at least $d|N_1|/3$.
Since, for any item $j\in \bigcup_{i\in N_1} M_i^h$, any agent $k$ with $\overline{p}_{i,j}>0$ satisfies 
$v'_{k,j}\geq  d v_{i,j}-1$, the total value, with respect to valuations $v'$, generated by the item fractions 
of the items ``lost'' by the agents in $N_1$ is at least
\begin{align*}
\sum_{i\in N_1} \sum_{j\in M_i^h} &(p_{i,j}-\overline{p}_{i,j}) (d v_{i,j}-1) \\
~&\geq~ d|N_1|\left(0.5-\frac{1}{d}\right) -\sum_{i\in N_1}\sum_{j\in M_i^h} (p_{i,j}-\overline{p}_{i,j})\\
~&\geq~ d|N_1|\left(0.5-\frac{1}{d}\right) -|N_1|\\
~&\geq~ \frac{d-4}{2}|N_1|\\
~&\geq~ \frac{d}{3}|N_1|,
\end{align*}
where the last inequality uses the fact that $d\geq 12$. But, since the total value of each agent in $N_2$
with respect to valuations $v'$ is exactly 1, there need to be at least $\frac{d}{3}|N_1|$ agents in $N_2$
sharing this value, otherwise there would exist some agent $i\in N_2$ such that $u'_i(\overline{p})>1$. 
\end{proof}

\begin{theorem}
\label{thm::upper-bdd-on-rho}
For any problem instance, the value of $\rho$ is \UpperBound.
\end{theorem}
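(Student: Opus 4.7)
The plan is to iterate Lemma~\ref{lem:ratio_change} repeatedly, starting from a single badly-served agent and amplifying it into larger and larger populations of badly-served agents, until the population would exceed $n$. Fix an instance, a removed subset, and let $N'$ be the remaining agents; let $p,\overline{p}$ be the NSW optima for $N$ and $N'$ respectively. Let $i^*\in N'$ attain the worst utility ratio, so $u_{i^*}(p)/u_{i^*}(\overline{p})=\rho$. We may assume $\rho\geq 12$ (otherwise the bound is immediate). Then the singleton $\{i^*\}$ with drop factor $d_0=\rho$ satisfies the hypothesis of Lemma~\ref{lem:ratio_change}.

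Next I would iterate: given a set $N^{(k)}$ of remaining agents with $u_i(\overline{p})\leq u_i(p)/d_k$ for all $i\in N^{(k)}$, Lemma~\ref{lem:ratio_change} produces a set $N^{(k+1)}$ of remaining agents with $|N^{(k+1)}|\geq (d_k/3)\,|N^{(k)}|$, on which the drop factor becomes $d_{k+1}=d_k/4$. This iteration continues as long as $d_k\geq 12$, which allows $k$ up to roughly $\log_4(\rho/12)\approx\tfrac{1}{2}\log_2\rho$. Unrolling the recursion gives $d_k=\rho/4^k$ and
\[
|N^{(k)}|~\geq~\prod_{j=0}^{k-1}\frac{d_j}{3}~=~\frac{\rho^k}{3^k\cdot 2^{k(k-1)}}.
\]

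Since $N^{(k)}\subseteq N'$, we must have $|N^{(k)}|\leq n$. Taking $\log_2$ of the displayed lower bound on $|N^{(k)}|$ and substituting $\log_2\rho\approx 2k$ (from the choice of $k\approx\tfrac{1}{2}\log_2\rho$), the linear terms in $k$ cancel and the inequality becomes, up to lower-order corrections, $k^2\leq \log_2 n$. This forces $k=O(\sqrt{\log n})$, and since $\log_2\rho=\Theta(k)$ this immediately yields $\rho=O(2^{2\sqrt{\log n}})$, which is the desired \UpperBound{} since $2^{2\sqrt{\log n}}=o(n^\epsilon)$ for every constant $\epsilon>0$.

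The main obstacle I anticipate is the tight interplay between the geometric shrinkage of $d_k$ by a factor of $4$ per step and the multiplicative growth of $|N^{(k)}|$ by the factor $d_k/3$ at each step; because that growth factor itself halves geometrically, the accumulated product incurs a $2^{-k(k-1)}$ penalty. This quadratic-in-$k$ loss is exactly what produces a $\sqrt{\log n}$ exponent in the final bound rather than a $\log n$ exponent, and so most of the care in writing the proof lies in tracking the constants (the $3$ in $d_k/3$ and the stopping threshold $12$) precisely enough to land on the claimed $O(2^{2\sqrt{\log n}})$ rather than some weaker quasi-polynomial bound.
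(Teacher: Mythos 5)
Your proposal is correct and takes essentially the same route as the paper's proof: iterate Lemma~\ref{lem:ratio_change} from a single worst-off agent to build successively larger sets $N^{(k)}$ of agents whose utility drops by at least $d_k = \rho/4^k$, and use the quadratic-in-$k$ penalty $2^{-k(k-1)}$ accumulated in the product $\prod_{j<k} (d_j/3)$ against an absolute population bound to force $k^2 \lesssim \log n$, hence $\rho = O(2^{2\sqrt{\log n}})$. The one minor difference is the closing step: you conclude directly from the combinatorial constraint $|N^{(k)}| \le n$, whereas the paper phrases it as a contradiction with NSW optimality (choosing $\alpha = \sqrt{\log n}$ and $\rho \ge 4^{\sqrt{\log n}+1}$ would force more than $n$ agents to each have utility at most $1/4 < 1$ under $\overline{p}$, contradicting that $\overline{p}$ has NSW product at least that of $p$ restricted to the remaining agents, which is $1$). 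Both closings rest on the same population inequality, so the arguments are equivalent; yours is marginally cleaner in that it does not re-invoke optimality at the end, but as a result it also forgoes the ``utility $\le 4^\alpha/\rho$'' part of the lemma's conclusion as a sanity check on the final contradiction.
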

\begin{proof}
In order to prove this bound, we will repeatedly apply the result of Lemma~\ref{lem:ratio_change}.
Let $p$ and $\overline{p}$ be the NSW maximizing outcomes in a problem instance before and after 
some subset of the agents has been removed and, without loss of generality, let $v$ be the agent 
valuations normalized so that $u_i(p)=1$ for every agent $i$, and $v'$ be the valuations normalized so 
that $u'_i(\overline{p})=1$. 

By Definition~\ref{def:rho}, in an instance with utility monotonicity equal to $\rho$, there exists 
at least one agent $i\in N_1$ such that $u_i(p)/u_i(\overline{p})=\rho$ or $u_i(\overline{p})=u_i(p)/\rho$.
If $\rho > 12$, then Lemma~\ref{lem:ratio_change} would imply that there also exists a set $N_2$ of at least 
$\rho/3$ agents such that $u_k(\overline{p})\leq 4 u_k(p)/\rho = 4/\rho$ for every $k\in N_2$. Lemma~\ref{lem:ratio_change}, 
combined with the existence of the set $N_2$, in turn, implies the existence of an even larger group $N_3$ of at least 
$(\frac 13 \rho)\cdot (\frac 13 \rho/4)$ agents, and each agent $k\in N_3$ has value $u_k(\overline{p})\leq 16/\rho$. Applying Lemma~\ref{lem:ratio_change} 
a total of $\alpha$ times 
thus implies the existence of a set of at least $(\rho/3)^{\alpha}\cdot (1/4)^{\alpha(\alpha -1)/2}$ agents such that each such agent $k$
has value $u_k(\overline{p})\leq 4^{\alpha}/\rho$. 
Assume that there exists some instance for which $\rho$ is at least $4^{\sqrt{\log n}+1}$. If we choose $\alpha=\sqrt{\log n}$, 
however, this implies the existence of 
$(\rho/3)^{\alpha}\cdot (1/4^{\alpha(\alpha -1)/2})
\ge 4^{(\sqrt{\log n}+1)\cdot \sqrt{\log n}}/[(3/2)^{\sqrt{\log n}}\cdot 2^{\sqrt{\log n}\cdot \sqrt{\log n}}]\ge (8/3)^{\sqrt{\log n}}\cdot n \ge n$ 
agents of value at most 
$4^{\alpha}/\rho \le 1/4$. 
But, this would imply that all the agents have a value less
than 1 in $\overline{p}$, which contradicts the fact that $\overline{p}$ is a NSW maximizing solution because
the product in $p$ is equal to 1.
\hide{
By definition of our approximation measure, in an instance with approximation factor is $\rho$ there exists 
at least one agent $i\in N_1$ such that $u_i(p)/u_i(\overline{p})=\rho$ or $u_i(\overline{p})=u_i(p)/\rho$.
If $\rho > 12$, then Lemma~\ref{lem:ratio_change} would imply that there also exists a set $N_2$ of at least 
$\rho/3$ agents such that $u_k(\overline{p})\leq 4 u_k(p)/\rho = 4/\rho$ for every $k\in N_2$. Lemma~\ref{lem:ratio_change}, 
combined with the existence of the set $N_2$, in turn, implies the existence of an even larger group $N_3$ of at least 
$(\rho/3)(\rho/12)$ agents, and each agent $k\in N_3$ has value $u_k(\overline{p})\leq 16/\rho$. Applying Lemma~\ref{lem:ratio_change} 
a total of $\alpha$ times 
thus implies the existence of a set of at least $(\rho/12)^{\alpha}$ agents such that each such agent $k$
has value $u_k(\overline{p})\leq 4^{\alpha}/\rho$. 
Assume that there exists some instance for which $\rho$ is $\omega(5^{\sqrt{\log n}})$. If we choose $\alpha=\sqrt{\log n}/2$, 
however, this implies the existence of $(\rho/12)^{\alpha}=\omega(5^{(\log n)/2}/12^{\sqrt{\log n}})=\omega(n)$ agents of value at most 
$4^{\alpha}/\rho=o(1)$. But, this would imply that all the agents have a value less
than 1 in $\overline{p}$, which contradicts the fact that $\overline{p}$ is a NSW maximizing solution because
the product in $p$ is equal to 1.
}
%
%
\end{proof}

\subsection{Lower Bound.}\label{subsec:LowerBound}

We conclude with a lower bound showing that for a very carefully designed (and somewhat artificial) family of 
instances, the upper bound of Theorem~\ref{thm::upper-bdd-on-rho} is essentially tight. 
\begin{theorem}
\label{thm::lower-bdd-on-rho}
There exists a family of problem instances for which $\rho^* = \Omega(2^{\sqrt{\log n}/2})$.
\end{theorem}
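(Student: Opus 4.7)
The plan is to construct a recursive, layered family of instances whose structure parallels (up to constants) the cascading argument from the upper-bound proof of Theorem~\ref{thm::upper-bdd-on-rho}. The three-agent instance of Table~\ref{tbl::simple-ex-explained} already exhibits the basic building block: removing one agent ($c$) forces a neighbour ($b$) to share its favourite item with a further agent ($a$), producing a constant-factor utility drop at $b$. I plan to stack $L$ such sharing interactions on top of each other so that every layer experiences a constant-factor drop and the drops multiply to $2^{\Omega(L)}$ at the top, and then to choose $L = \Theta(\sqrt{\log n}/2)$ via a careful balancing of layer sizes.

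Concretely, I would define layers $\ell = 0, 1, \ldots, L$, where layer $\ell$ contains $k_\ell$ agents and $k_\ell$ ``home'' items. Each agent in layer $\ell$ values her home item at $1$ and a small number of ``bridging'' items owned by layer $\ell - 1$ at roughly $2$, mimicking the $1{:}2$ ratio from the table; a single distinguished agent in the bottom layer plays the role of $c$ and is the one removed. The sizes $k_\ell$ must grow fast enough that the agents in layer $\ell$ can collectively absorb the sharing pressure propagated down from layer $\ell - 1$. A natural choice is $k_\ell \approx C^{\ell(\ell - 1)/2}$ for a suitable constant $C$, which yields $n = \Theta(k_L) = 2^{\Theta(L^2)}$ and therefore $L = \Theta(\sqrt{\log n})$, which suffices to give the claimed $\Omega(2^{\sqrt{\log n}/2})$ bound.

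The analysis then proceeds in three steps. First, by an exchange argument analogous to Lemma~\ref{lem:value_dif}, I would argue that in the full instance the NSW-maximizing assignment $p$ is essentially the diagonal, so that each agent receives her home item and utility $1$. Second, after removing the distinguished bottom agent, I would use the KKT conditions for the NSW program to show that the new NSW-maximizing assignment $\bar p$ has a cascading form: agents in layer $L$ split mass between their home items and layer-$(L-1)$ bridging items, forcing agents in layer $L-1$ to split between their home and layer-$(L-2)$ bridging items, and so on. Third, I would track the amount of mass each layer must transmit upward and, using the $1{:}2$ value gap, prove a $\Omega(1)$ multiplicative drop per layer, so that an agent in layer $0$ ends up with utility $2^{-\Omega(L)}$ and hence $\rho \geq 2^{\Omega(L)} = 2^{\Omega(\sqrt{\log n}/2)}$.

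The main obstacle is verifying the global optimality of the candidate cascading assignment $\bar p$: the NSW objective is nonlinear and there are many feasible directions that re-route mass across several layers at once, so one must certify that no such deviation improves the Nash product. I plan to handle this by writing $\log \mathrm{NSW}$ as a concave function and certifying local (and hence global) optimality via the common Lagrange multiplier on each layer's marginal constraints, exploiting the fact that the choice $k_\ell \approx C^{\ell(\ell-1)/2}$ is precisely what equalizes those multipliers across layers. A secondary subtlety is that after removing one agent the doubly-stochastic constraint is no longer square, so the bridging structure must be tuned so that every item is still consumed exactly once in $\bar p$; I expect that padding the bottom layer with a suitable number of ``slack'' items valued at $1$ only by the agents of that layer will resolve this without affecting the cascade.
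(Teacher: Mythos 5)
Your high-level plan is genuinely on the right track and mirrors the paper's construction: a chain of markets (the paper calls them $M_0,\ldots,M_s$) sized super-exponentially in the layer index, total size $n = 2^{\Theta(L^2)}$ with $L = \Theta(\sqrt{\log n})$, and verification of the two NSW-maximizing allocations via the KKT dual variables. Your observation that the $k_\ell \approx C^{\ell(\ell-1)/2}$ scaling is forced by equalizing the Lagrange multipliers is the right intuition, and the paper's $k_{r-1} = \sar k_r$ with $\sar \approx v_r^3$ and $v_r = 2(9/8)^r$ realizes exactly this.

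However, there is a concrete gap between the sketch and a working construction. First, removing a single bottom-layer agent is very unlikely to suffice. In the simple $3$-agent example the utility transfer dissipates after one hop: $b$ loses a quarter of its value while $a$ \emph{gains}, so a naive chain does not compound. The paper instead removes whole batches of agents at \emph{every} layer ($c_0,e_0$ in $M_0$, and $f_r,g_r,h_r$ in each $M_r$, plus $i_s$ in $M_s$) precisely to re-inject the pressure that would otherwise dissipate; these auxiliary exiting agents and their paired items ($F_r,G_r,H_r$) are also what keeps the allocation doubly stochastic without resorting to ad hoc slack items. Second, the ``constant-factor drop per layer, drops multiply at the top'' picture is not quite the right mechanism: distinct agents in distinct layers each suffer their own drop, and a constant drop per layer would yield only a constant $\rho$. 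What actually happens (consistent with the cascade in Lemma~\ref{lem:ratio_change}) is that the loser in the tiny top-level market drops by $\rho = v_s \approx 2^{\sqrt{\log n}/2}$ and the drop factors decay geometrically ($v_r = 2(9/8)^r$) as the layer sizes grow, so the multiplying happens in the layer \emph{sizes}, not in any single agent's loss. Third, the ``home at $1$, bridge at $2$'' valuation pattern is far too coarse to satisfy the KKT complementary-slackness conditions at both equilibria simultaneously; the paper needs carefully tuned ratios (e.g., $16/7$, $2/7$, $4687/7008$) and integrality constraints on the size parameters precisely because the dual prices $t_j, q_i$ must be made consistent across overlapping markets. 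So the plan identifies the right scaffolding, but the delicate part of the proof --- arranging the valuations, the multiple exiting agents, and the sizes so that both candidate assignments are certifiably optimal --- is entirely deferred, and the single-removal and uniform-per-layer-drop assumptions as stated would not carry through.
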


Due to space limitations and the complexity of the construction that
yields Theorem~\ref{thm::lower-bdd-on-rho}, we defer its description
to Appendix~\ref{app:lower-bound}. To exhibit how we use KKT conditions to prove that
this elaborate construction implies the desired bound, we use the rest
of this section to apply this approach to the much simpler
construction of the example in Table~\ref{tbl::simple-ex-explained},
which yields a bound of $\rho^*\geq 4/3$.

Our lower bound construction in the appendix proceeds by building a family of instances (parameterized by the number
of agents $n$), and in each instance, we define an ``initial'' setting in which all agents are present, and a ``final'' 
setting, in which some agents have been removed. For each setting, we identify the Nash bargaining solution, 
respectively called the initial and final solution. We focus on a particular agent, called the \emph{loser}, 
who is present in both settings. We show that the loser's valuation drops by a multiplicative factor $\mu$
in going from the initial to the final solution, and consequently, $\rho \ge 1/\mu$ for that market and
$\rho^* \ge 1/\mu$ overall. 

To prove a lower bound, we need to be able to verify that a given doubly stochastic matrix is indeed
the Nash bargaining solution of the instance at hand. We do so using the KKT conditions, which allow
us to interpret these solutions as a form of market equilibrium. The optimization problem which yields 
the Nash bargaining solution in one-sided matching markets, is shown below (where $m=n$ is used to denote 
the number of items):
\begin{align*}
& \max \sum_{i=1}^n \log \left[\sum_{j=1}^n \vij \pij \right] \\
& ~~\text{such that: ~ for all} ~ i:~~\sum_{j=1}^m \pij \le 1\\
&~~~~~~~~~~~~~~~~~~~\text{for all} ~ j:~~\sum_{i=1}^n \pij \le 1\\
&~~~~~~~~~~~~~~~~~~~\text{for all} ~i,j:~~\pij \ge 0.
\end{align*}

If $t_j$ is the dual variable related to each item $j$, and $q_i$ is the dual variable related to each
agent $i$ in the above program, then the KKT conditions state that:
\begin{align}
\label{p-q-constraints}
&\text{for all}~j:~t_j \ge 0,  ~\text{ and }~  t_j >0 \Longrightarrow \sum_{i=1}^n \pij  = 1\\
\label{q-positive}
&\text{for all}~i:~q_i \ge 0  ~\text{ and }~ q_i > 0 \Longrightarrow \sum_{j=1}^m \pij = 1\\
\label{tight-utility-rule}
\begin{split}
& \text{for all} ~i,j: ~~ \frac {\vij}{t_j+q_i} \le \sum_{j=1}^m \vij \pij  ~\text{ and }\\
 &~~~ \pij > 0 \Longrightarrow \frac {\vij}{t_j+q_i} = \sum_{j=1}^m \vij \pij
\end{split}
\end{align}

The KKT conditions are necessary and sufficient for the optimal solution when the constraints are linear and the objective 
is convex, as is the case here. To check whether a given candidate solution $p$ is a Nash bargaining solution for some instance, 
we first normalize the valuations so that $\sum_{j=1}^m \vij \pij = 1$ for all $i$. Then, at a solution satisfying the KKT conditions 
we have $\vij = t_j+q_i$ if $p_{i,j}>0$ and $\vij \le t_j + q_i$ if $p_{i,j}=0$. Thus a solution that satisfies these two conditions plus 
conditions~\eqref{p-q-constraints}--\eqref{q-positive} is a Nash bargaining solution. Based on this conditions, the values of $t_j$ can
be interpreted as item-specific ``prices'' and the values of $q_i$ as agent-specific ``prices'', leading to an interpretation of the Nash 
bargaining solution as a market equilibrium: to ``buy'' a $p_{i,j}$ fraction of item $j$, agent $i$ needs to spend 
$(t_j+q_i)p_{i,j}$, and each agent prefers to buy only items with the best value over price ratio (see condition \eqref{tight-utility-rule}).

To illustrate the usefulness of these variables, which are used extensively in the appendix,
we revisit the instance of Table~\ref{tbl::simple-ex-explained} where the items are named $A$, $B$, and $C$; the bidders 
$a$, $b$, and $c$; and the unscaled valuations of the agents appear in Table~\ref{tbl::simple-ex}(i). 

First, we observe that in the initial equilibrium (with all agents present), $a,b,c$ receive items $A,B,C$, respectively,
each with probability 1. In Table~\ref{tbl::simple-ex}(ii) we show the normalized values of the agents in this equilibrium
and we also provide the dual variables $t_j$ for each item $j$ and $q_i$ for each agent $i$. It is easy to verify that the
aforementioned KKT conditions are satisfied in this case and hence this is indeed the Nash bargaining solution when all agents
are present. If agent $c$ is removed, then the final equilibrium finds $a$ receiving each of $A$ and $B$ with probability $\frac 12$, 
while $b$ receiving each of $B$ and $C$ with probability $\frac 12$. Table~\ref{tbl::simple-ex}(iii) provides the scaled valuations
and dual variable values for this outcome, and it is again easy to verify that KKT conditions are satisfied. 
In this example, bidder $b$ is the loser. Using the valuations from Table~\ref{tbl::simple-ex}(ii), her value in the initial
equilibrium was 1 and it dropped to $0.75$ in the final equilibrium, leading to $\rho = \frac 43$ in this 
example.

%

\renewcommand{\colspace}{0.8cm}
\newcommand{\tabspace}{0.1cm}
\newcommand{\tspace}{0.2cm}
\begin{table*}[t]
\centering
\begin{tabular}{C{\tspace} C{\colspace} C{\colspace} C{\colspace} C{\tabspace} C{\tspace} C{\colspace} C{\colspace} C{\colspace} C{\colspace} C{\tabspace} C{\tspace} C{\colspace} C{\colspace} C{\colspace} C{\colspace}}
&$A$&$B$&$C$ &  && $A$ & $B$ & $C$ & $q$ &       && $A$ & $B$ & $C$ & $q$ \\ 
\cline{2-4}      \cline{7-9}                     \cline{13-15}
$a$ & 1 & 2 & 0 && & {\bf 1} & 2 & 0     & 1     && & \boldmath{$2/3$} & \boldmath{$4/3$} & 0 & 0\\
$b$ & 0 & 2 & 1 && & 0 & {\bf 1} & $1/2$ & 0     && & 0 & \boldmath{$4/3$} & \boldmath{$2/3$} & 0 \\
$c$ & 0 & 0 & 1 && & 0 & 0 & {\bf 1}     & 0     && & & \\ 
\cline{2-4}      \cline{7-9}                      \cline{13-15}
\\ & & &           &&$t$& 0 & 1 & 1 &               &&$t$ & $2/3$ & $4/3$ & $2/3$ & \\
\\
& \multicolumn{3}{c}{i. Agent valuations} &&& \multicolumn{3}{c}{ii. Initial solution} &&&& \multicolumn{3}{c}{iii. Final solution} \\
\end{tabular}
\caption{\label{tbl::simple-ex}Simple instance involving three agents; $a,b,c$; and three items; $A,B,C$. The value of agent $b$ in the Nash bargaining solution
after the removal of agent $c$ drops by a factor of $4/3$.  Normalized values are depicted in (ii) and (iii) along with prices $t$ and $q$; these values are depicted in bold-face if the allocation probability of the NSW solution is non-zero.}
\end{table*}

\hide{
\renewcommand{\colspace}{0.8cm}
\begin{table}[t]
\centering
\begin{tabular}{C{\tspace} C{\colspace} C{\colspace} C{\colspace} C{\tabspace} C{\tspace} C{\colspace} C{\colspace} C{\colspace} C{\colspace} C{\tabspace} C{\tspace} C{\colspace} C{\colspace} C{\colspace} C{\colspace}}
&$A$&$B$&$C$ &  && $A$ & $B$ & $C$ & $q$ &       && $A$ & $B$ & $C$ & $q$ \\ 
\cline{2-4}      \cline{7-9}                     \cline{13-15}
$a$ & 1 & 2 & 0 && & {\bf 1} & 2 & 0     & 1     && & \boldmath{$2/3$} & \boldmath{$4/3$} & 0 & 0\\
$b$ & 0 & 2 & 1 && & 0 & {\bf 1} & $1/2$ & 0     && & 0 & \boldmath{$4/3$} & \boldmath{$2/3$} & 0 \\
$c$ & 0 & 0 & 1 && & 0 & 0 & {\bf 1}     & 0     && & & \\ 
\cline{2-4}      \cline{7-9}                      \cline{13-15}
\\ & & &           &&$t$& 0 & 1 & 1 &               &&$t$ & $2/3$ & $4/3$ & $2/3$ & \\
\\
& \multicolumn{3}{c}{i. Agent valuations} &&& \multicolumn{3}{c}{ii. Initial solution} &&&& \multicolumn{3}{c}{iii. Final solution} \\
\end{tabular}
\caption{\label{tbl::simple-ex}Simple instance involving three agents; $a,b,c$; and three items; $A,B,C$. The value of agent $b$ in the Nash bargaining solution
after the removal of agent $c$ drops by a factor of $4/3$.  Normalized values are depicted in (ii) and (iii) along with prices $t$ and $q$; these values are depicted in bold-face if the allocation probability of the NSW solution is non-zero.}
\end{table}
}

\section{Further Related Work}\label{sec:related}

\citet{HZ-79} study the problem of matching with cardinal preferences
and the solution of competitive equilibrium from equal incomes (CEEI).
CEEI gives both a natural cardinal notion of efficiency and of
fairness.  Recently, \citet{AKT-17} give a polynomial time algorithm
for computing the CEEI in matching markets when there are a constant
number of distinct agent preferences.  To our knowledge, the
complexity of computing CEEI in general matching problems is unknown.
With linear preferences, but without the unit-demand constraint, CEEI
and Nash social welfare coincide and can be computed in polynomial
time.  \citet{dk} generalize this computational result to piecewise
linear concave utilities when the number of goods is constant.

Recently, \citet{bud-11} considers the generalization from matching
to a combinatorial assignment problem where agents may have non-linear
preferences over bundles of goods, and shows that an approximate version
of CEEI exists. This work also shows that, in large markets, the mechanism that outputs
this approximate CEEI is asymptotically truthful.  Heuristics for
computing the CEEI outcome are given by \citet{OSB-10} and these
heuristics have been deployed for the course assignment problem by
\citet{BCKO-16}.  On the other hand, \citet{OPR-16} show that the
computation of CEEI in these combinatorial assignment problems is
PPAD-hard.

The Nash social welfare objective of our work compares to competitive
equilibrium from equal incomes of the aforementioned works as
follows: the two objectives coincide for linear preferences without
the matching constraint \citep{vaz-07}, but with the matching
constraint the concepts are not equivalent.  Both NSW and CEEI
outcomes are Pareto efficient, but to our knowledge, in
matching markets, the agents' utilities under the two criteria have not
been directly compared.  Contrasting with CEEI, for stochastic
matchings, the NSW outcomes can be calculated by a convex program,
i.e., a program that optimizes the product of utilities over the marginal
probabilities given by a doubly-stochastic matrix, and is therefore
computationally tractable.

A second line of literature considers ordinal mechanisms for one-sided
matching.  The {\em random serial dictatorship} (RSD) mechanism has a long
history of practical application. Recently it has been used in
applications such as housing and course allocation.
\citet{PS-11} study the use of RSD for school choice in New York
City.  RSD is truthful, ex post Pareto efficient, and easy to
implement \citep[e.g.,][]{AS-98}.  On the other hand, RSD is neither
ex ante Pareto efficient nor envy-free.  To remedy this deficiency of
RSD, \citet{BM-01} developed the {\em probabilistic serial} (PS) mechanism
which, while not truthful, is ordinally efficient, envy-free,
and easy to implement.  PS has been studied in various contexts
ranging from school assignments to kidney matching and it is often
contrasted with RSD. For example, \citet{PS-11} show that students
often obtain a more desirable random assignment from PS than from RSD.
Nonetheless, under a large market assumption PS and RSD converge and
the desirable properties of both are attained
\citep{KM-10,CK-10}. More recent work has also further studied and
compared the performance of these two mechanisms with respect to different
metrics both theoretically and experimentally (e.g., \cite{AFCMM16,HLC18}).

Several recent papers have considered approximation in one-sided
matching markets without money when agents have cardinal preferences.
With cardinal preferences, it is possible to consider the aggregate
welfare of an allocation as the sum of the expected utilities of each
agent.  For an aggregate notion of welfare to make sense, the values
of the agents need to be normalized.  Two common normalizations are
unit-sum, which scales each agent's values so that their sum is one,
and unit-range, which scales and shifts each agent's values so that
the minimum value is zero and the maximum value is one.  Under either
of these normalizations, \citet{FFZ-14} show that randomized serial
dictatorship is an $\Theta(\sqrt{n})$ approximation and that no
algorithm for mapping ordinal preferences to allocations is
asymptotically better.  \citet{CFFGZ-16} consider the unit-sum
normalization and show that the price of anarchy of PS is
$\Theta(\sqrt{n})$ and that no mechanism, ordinal or cardinal, is
asymptotically better.  Important comparison of these above results to
ours are as follows: our guarantees do not require a normalization of
values.  Our approximation guarantees are on per-agent utilities, not
on the aggregate welfare which allows some agents to be harmed if
other agents benefit.  We show that our randomized partial improvement
mechanism is asymptotically better than RSD in our per-agent analysis
framework by a factor of $\Omega(\sqrt{n})$.

More recently, \citet{ILWM-17} use a notion of approximate Pareto efficiency 
to analyze the {\em raffles mechanism} in one-sided matching markets. This efficiency 
measure provides per-agent approximation guarantees with respect to the Pareto frontier. Our approximation 
measure can therefore be thought of as a refinement where instead we compare the agent utilities to 
a specific highly desired point on the Pareto frontier (the Nash bargaining solution). 
Instead of eliciting the preferences
of the agents, the raffles mechanism instead provides the agents with tickets that they can allocate 
to items, and items are distributed in proportion to the allocated tickets. As a result, this mechanism is not truthful, but the main result shows that its Nash equilibria are $e/(e-1)$-approximately Pareto efficient, i.e., that there is no equilibrium where each agent's utility is increased by more than an $e/(e-1)$ factor.

Our mechanism is based on the {\em partial allocation} (PA) mechanism of
\citet{CGG-13} that truthfully and approximately solves the fair
division of heterogeneous goods.  A novel feature of the PA mechanism
is that a fraction of the fair allocation is withheld from individual
agents in a way that behaves, in the agents' utilities, as payments
that align the incentives of the agents with the Nash social welfare
objective.  The fair division problem is closely tied to the
cake-cutting literature, which originated in the social sciences but
has garnered interest from computer scientists and mathematicians
alike \citep{bt,m,rw,y}. The cake -- a heterogeneous, divisible item -- 
is represented by the interval $[0,
  1]$ and the agents have valuation functions assigning each
subinterval to a non-negative value. These valuations are also assumed
to be additive. Algorithmic challenges in cake cutting have recently
attracted the attention of computer scientists. A historical overview
as well as notable results in cake cutting can be found in surveys
by \citet{pro-13} and \citet{PM-16}. The cardinal matching problem we 
consider is closely related to the cake
cutting problem with piecewise uniform valuations since our agents
have linear preferences over items.

Random sampling techniques are now common in the literature on mechanism
design.  They have been primarily developed for revenue maximization
problems where the seller lacks prior information on the agents'
preferences \citep{HK-07}.  Our use of random sampling more closely
resembles the literature on redistribution mechanisms, where the
designer aims to maximize the consumer surplus and monetary transfers
between agents are allowed \citep{cav-06,GV-07}.  An approach by
\citet{mou-09} is to single out a random agent as the residual
claimant, run an efficient mechanism on the remaining agents, and pay
the revenue generated by the mechanism to the residual claimant.
Similarly, our mechanism randomly partitions the agents into two
groups and attempts to implement the PA mechanism on
the first group while using the items that would be reserved for the
second group to implement the first group's outside option.  Further
connections between our approach and redistribution mechanisms may be
possible.

\section{Conclusion and Future Work}

We defined the random partial improvement (RPI) mechanism for
one-sided matching markets without monetary transfers. RPI both
truthfully elicits the cardinal preferences of the agents and outputs
a distribution over matchings that approximates every agent's utility
in the Nash bargaining solution.


Our analysis suggests several open questions and directions for future
work. A natural open question is whether there exists a truthful mechanism 
that can achieve a constant factor approximation of the Nash bargaining 
benchmark. The main obstacle for the RPI mechanism was the non-monotonicity 
of the Nash bargaining benchmark, so it would be interesting to see if some other mechanism
could circumvent this issue. Alternatively, since the construction leading
to the lower bound is quite artificial, are there any natural assumptions
regarding the valuations of the agents that would mitigate the non-monotonicity?

Another interesting direction would be to study how the utilities of agents in
the CEEI outcomes compare to those of the Nash bargaining solution. Recall that
the CEEI and the Nash bargaining solution are equivalent in linear markets without 
the matching constraint \citep{vaz-07}, but are different for matching
markets. 

Finally, our paper provides a non-trivial mechanism aiming to approximate a
well-motivated ex-ante Pareto efficient outcome. One could also consider the
design of truthful mechanisms aiming to approximate alternative benchmarks on 
the ex-ante Pareto frontier. Natural candidates would be the utilitarian (or
the egalitarian) outcome which maximize the sum (or the minimum) of the agents'
utilities. One drawback of these outcomes is that, unlike the Nash bargaining
solution, they are not scale invariant, but one could consider scaled variants
of their objectives, e.g., where the agent values are normalized so that 
$\sum_{j\in M} v_{i,j}=1$ for every agent $i$.






\bibliographystyle{apalike}
\bibliography{refs}

\begin{thebibliography}{}

\bibitem[Abdulkadiroglu and Sonmez, 1998]{AS-98}
Abdulkadiroglu, A. and Sonmez, T. (1998).
\newblock Random serial dictatorship and the core from random endowments in
  house allocation problems.
\newblock {\em Econometrica}, 66:689--701.

\bibitem[Alaei et~al., 2017]{AKT-17}
Alaei, S., Jalaly~Khalilabadi, P., and Tardos, E. (2017).
\newblock Computing equilibrium in matching markets.
\newblock In {\em {ACM} Conference on Economics and Computation, {EC} '17},
  pages 245--261. ACM.

\bibitem[Aziz et~al., 2016]{AFCMM16}
Aziz, H., Filos{-}Ratsikas, A., Chen, J., Mackenzie, S., and Mattei, N. (2016).
\newblock Egalitarianism of random assignment mechanisms: (extended abstract).
\newblock In {\em International Conference on Autonomous Agents and Multiagent
  Systems}, pages 1267--1268.

\bibitem[Barman et~al., 2018]{BKV18}
Barman, S., Krishnamurthy, S.~K., and Vaish, R. (2018).
\newblock Finding fair and efficient allocations.
\newblock In {\em {ACM} Conference on Economics and Computation, {EC} '18},
  pages 557--574.

\bibitem[Bogomolnaia and Moulin, 2001]{BM-01}
Bogomolnaia, A. and Moulin, H. (2001).
\newblock A new solution to the random assignment problem.
\newblock {\em Journal of Economic Theory}, 100(2):295 -- 328.

\bibitem[Brams and Taylor, 1996]{bt}
Brams, S. and Taylor, A. (1996).
\newblock {\em Fair Division: From Cake Cutting to Dispute Resolution}.

\bibitem[Br{\^{a}}nzei et~al., 2017]{BGM-17}
Br{\^{a}}nzei, S., Gkatzelis, V., and Mehta, R. (2017).
\newblock Nash social welfare approximation for strategic agents.
\newblock In {\em {ACM} Conference on Economics and Computation, {EC} '17},
  pages 611--628.

\bibitem[Budish, 2011]{bud-11}
Budish, E. (2011).
\newblock The combinatorial assignment problem: Approximate competitive
  equilibrium from equal incomes.
\newblock {\em Journal of Political Economy}, 119(6):1061--1103.

\bibitem[Budish et~al., 2016]{BCKO-16}
Budish, E., Cachon, G.~P., Kessler, J.~B., and Othman, A. (2016).
\newblock Course match: A large-scale implementation of approximate competitive
  equilibrium from equal incomes for combinatorial allocation.
\newblock {\em Operations Research}, 65(2):314--336.

\bibitem[Caragiannis et~al., 2016]{CKMPSW-16}
Caragiannis, I., Kurokawa, D., Moulin, H., Procaccia, A.~D., Shah, N., and
  Wang, J. (2016).
\newblock The unreasonable fairness of maximum {N}ash welfare.
\newblock In {\em {ACM} Conference on Economics and Computation, {EC} '16},
  pages 305--322.

\bibitem[Cavallo, 2006]{cav-06}
Cavallo, R. (2006).
\newblock Optimal decision-making with minimal waste: Strategyproof
  redistribution of {VCG} payments.
\newblock In {\em International Conference on Autonomous Agents and Multiagent
  Systems}, pages 882--889. ACM.

\bibitem[Che and Kojima, 2010]{CK-10}
Che, Y.-K. and Kojima, F. (2010).
\newblock Asymptotic equivalence of probabilistic serial and random priority
  mechanisms.
\newblock {\em Econometrica}, 78(5):1625--1672.

\bibitem[Christodoulou et~al., 2016]{CFFGZ-16}
Christodoulou, G., Filos-Ratsikas, A., Frederiksen, S. K.~S., Goldberg, P.~W.,
  Zhang, J., and Zhang, J. (2016).
\newblock Social welfare in one-sided matching mechanisms.
\newblock In {\em International Conference on Autonomous Agents and Multiagent
  Systems}, pages 30--50.

\bibitem[Cole and Gkatzelis, 2018]{CG-18}
Cole, R. and Gkatzelis, V. (2018).
\newblock Approximating the {N}ash social welfare with indivisible items.
\newblock {\em {SIAM} J. Comput.}, 47(3):1211--1236.

\bibitem[Cole et~al., 2013]{CGG-13}
Cole, R., Gkatzelis, V., and Goel, G. (2013).
\newblock Mechanism design for fair division: Allocating divisible items
  without payments.
\newblock In {\em Proceedings of the Fourteenth ACM Conference on Electronic
  Commerce}, EC '13, pages 251--268. ACM.

\bibitem[Devanur and Kannan, 2008]{dk}
Devanur, N.~R. and Kannan, R. (2008).
\newblock Market equilibria in polynomial time for fixed number of goods or
  agents.
\newblock In {\em IEEE Symposium on Foundations of Computer Science, 2008.
  FOCS'08}, pages 45--53.

\bibitem[Filos-Ratsikas et~al., 2014]{FFZ-14}
Filos-Ratsikas, A., Frederiksen, S. K.~S., and Zhang, J. (2014).
\newblock Social welfare in one-sided matchings: Random priority and beyond.
\newblock In {\em International Symposium on Algorithmic Game Theory}, pages
  1--12. Springer.

\bibitem[Garg et~al., 2018]{GHM18}
Garg, J., Hoefer, M., and Mehlhorn, K. (2018).
\newblock Approximating the {N}ash social welfare with budget-additive
  valuations.
\newblock In {\em {ACM-SIAM} Symposium on Discrete Algorithms, {SODA} '18},
  pages 2326--2340.

\bibitem[Goldberg et~al., 2006]{GHKSW-06}
Goldberg, A.~V., Hartline, J.~D., Karlin, A.~R., Saks, M., and Wright, A.
  (2006).
\newblock Competitive auctions.
\newblock {\em Games and Economic Behavior}, 55(2):242--269.

\bibitem[Guo and Conitzer, 2007]{GV-07}
Guo, M. and Conitzer, V. (2007).
\newblock Worst-case optimal redistribution of {VCG} payments.
\newblock In {\em {ACM} Conference on Economics and Computation, {EC} '07},
  pages 30--39. ACM.

\bibitem[Hartline and Karlin, 2007]{HK-07}
Hartline, J.~D. and Karlin, A.~R. (2007).
\newblock Profit maximization in mechanism design.
\newblock In Nisan, N., Roughgarden, T., Tardos, E., and Vazirani, V.~V.,
  editors, {\em Algorithmic game theory}, chapter~13. Cambridge University
  Press Cambridge.

\bibitem[Hartline and Roughgarden, 2008]{hr}
Hartline, J.~D. and Roughgarden, T. (2008).
\newblock Optimal mechanism design and money burning.
\newblock In {\em ACM Symposium on Theory of Computing}, STOC '08, pages
  75--84.

\bibitem[Hosseini et~al., 2018]{HLC18}
Hosseini, H., Larson, K., and Cohen, R. (2018).
\newblock Investigating the characteristics of one-sided matching mechanisms
  under various preferences and risk attitudes.
\newblock {\em Autonomous Agents and Multi-Agent Systems}, 32(4):534--567.

\bibitem[Hylland and Zeckhauser, 1979]{HZ-79}
Hylland, A. and Zeckhauser, R. (1979).
\newblock The efficient allocation of individuals to positions.
\newblock {\em Journal of Political economy}, 87(2):293--314.

\bibitem[Immorlica et~al., 2017]{ILWM-17}
Immorlica, N., Lucier, B., Weyl, G., and Mollner, J. (2017).
\newblock Approximate efficiency in matching markets.
\newblock In {\em International Conference on Web and Internet Economics},
  pages 252--265. Springer.

\bibitem[Kojima and Manea, 2010]{KM-10}
Kojima, F. and Manea, M. (2010).
\newblock Incentives in the probabilistic serial mechanism.
\newblock {\em Journal of Economic Theory}, 145(1):106 -- 123.

\bibitem[Moulin, 2003]{m}
Moulin, H. (2003).
\newblock {\em Fair Division and Collective Welfare}.

\bibitem[Moulin, 2009]{mou-09}
Moulin, H. (2009).
\newblock Almost budget-balanced {VCG} mechanisms to assign multiple objects.
\newblock {\em Journal of Economic theory}, 144(1):96--119.

\bibitem[Nash, 1950]{Nash50}
Nash, J. (1950).
\newblock The bargaining problem.
\newblock {\em Econometrica}, 18(2):155--162.

\bibitem[Othman et~al., 2016]{OPR-16}
Othman, A., Papadimitriou, C., and Rubinstein, A. (2016).
\newblock The complexity of fairness through equilibrium.
\newblock {\em {ACM} Transactions on Economics and Computation ({TEAC})},
  4(4):20.

\bibitem[Othman et~al., 2010]{OSB-10}
Othman, A., Sandholm, T., and Budish, E. (2010).
\newblock Finding approximate competitive equilibria: Efficient and fair course
  allocation.
\newblock In {\em International Conference on Autonomous Agents and Multiagent
  Systems}, pages 873--880.

\bibitem[Pathak and Sethuraman, 2011]{PS-11}
Pathak, P.~A. and Sethuraman, J. (2011).
\newblock Lotteries in student assignment: An equivalence result.
\newblock {\em Theoretical Economics}, 6(1):1--17.

\bibitem[Procaccia, 2013]{pro-13}
Procaccia, A.~D. (2013).
\newblock Cake cutting: Not just child's play.
\newblock {\em Communications of the {ACM}}, 56(7):78--87.

\bibitem[Procaccia and Moulin, 2016]{PM-16}
Procaccia, A.~D. and Moulin, H. (2016).
\newblock {\em Cake Cutting Algorithms}, page 311–330.
\newblock Cambridge University Press.

\bibitem[Robertson and Webb, 1998]{rw}
Robertson, J. and Webb, W. (1998).
\newblock {\em Cake Cutting Algorithms: Be Fair If You Can}.

\bibitem[Thomson, 1983]{T-83}
Thomson, W. (1983).
\newblock {The Fair Division of a Fixed Supply Among a Growing Population}.
\newblock {\em Mathematics of Operations Research}, 8(3):319--326.

\bibitem[Thomson and Lensberg, 1989]{TL-89}
Thomson, W. and Lensberg, T. (1989).
\newblock {\em Axiomatic Theory of Bargaining with a Variable Number of
  Agents}.
\newblock Cambridge University Press.

\bibitem[Vazirani, 2007]{vaz-07}
Vazirani, V.~V. (2007).
\newblock Combinatorial algorithms for market equilibrium.
\newblock In Nisan, N., Roughgarden, T., Tardos, E., and Vazirani, V.~V.,
  editors, {\em Algorithmic game theory}, chapter~5, pages 103--134. Cambridge
  University Press Cambridge.

\bibitem[Young, 1995]{y}
Young, H.~P. (1995).
\newblock {\em Equity: In Theory and Practice}.

\bibitem[Zhou, 1990]{zho-90}
Zhou, L. (1990).
\newblock On a conjecture by {G}ale about one-sided matching problems.
\newblock {\em Journal of Economic Theory}, 52(1):123--135.

\end{thebibliography}

\appendix

\section{The Lower Bound Construction}
\label{app:lower-bound}

The construction uses a collection of overlapping submarkets named $M_0, M_1, \ldots, M_s$. Associated with these markets
are integer parameters $\kr$, for $1 \le r \le s$. There will be $\kr$ copies of $M_r$ in the construction. As we shall see, $k_s=1$, thus
 there will be exactly one copy of $\Ms$.

Next, in Figure~\ref{fig::M0-flow} and Table~\ref{tbl::Co}, we show the form of $M_0$. The nodes in this figure
correspond to the items, and a directed edge $(\alpha, \beta)$, labeled by the name of an agent, indicates that
this agent was allocated portions of item $\alpha$ in the initial solution and portions of $\beta$ in the final solution.
Items and bidders occur with multiplicity possibly greater than 1 and this is called their size.
In the initial equilibrium, every item is fully allocated as the total size of the bidders
and items are the same; in the final equilibrium, item $A_0$ is the one item
that is not fully allocated.
Note that in each equilibrium, the conditions \eqref{p-q-constraints}--\eqref{tight-utility-rule}
from Section~\ref{subsec:LowerBound} are satisfied.

\begin{figure*}[htbp]
\begin{center}

\begin{tikzpicture}
\path (0,0) node [rectangle,draw](a) {items $A_0$, size 17,000} --
(0,-1.5) node[rectangle,draw](b) {items $B_0$, size 850} --
(-2,-3)  node[rectangle,draw](c) {items $C_0$, size 816} --
(2,-3)  node[rectangle,draw](d) {items $D_0$, size 34} --
(-4, - 4.5) node[draw](cc) {exit in final equil.} --
(0,-4.5)  node[rectangle,draw](e) {items $E_0$, size 33} --
(0, - 6) node[draw](ee) {exit in final equil.} --
(4,-4.5)  node[rectangle,draw](f) {items $F_0$, size 1} --
(8,-4.5)  node[rectangle,draw](g) {items $G_0$, size 5, alloc.\ to $f_0$} --
(4, -6) node[rectangle,draw](h) {items $H_0$, size 1};
 \draw[->] (a) -- node[right] {$a_0$} (b);
 \draw[->] (b)  -- node[right] {$b_0$} (c);
 \draw[->] (b)  -- node[left] {$b_0$} (d);
 \draw[->] (c)  -- node[right] {$c_0$} (cc);
 \draw[->] (d)  -- node[right] {$d_0$} (e);
 \draw[->] (e)  -- node[right] {$e_0$} (ee);
 \draw[->] (d)  -- node[left] {$d_0$} (f);
 \draw[->] (f)  -- node[left] {$f_0$} (h);
 \draw[->] (h)  -- node[right] {$h_0= a_1$} (4,-7.5);
  
\end{tikzpicture}
\caption{\label{fig::M0-flow}The Allocations in Market $M_0$.}
\end{center}
\end{figure*}
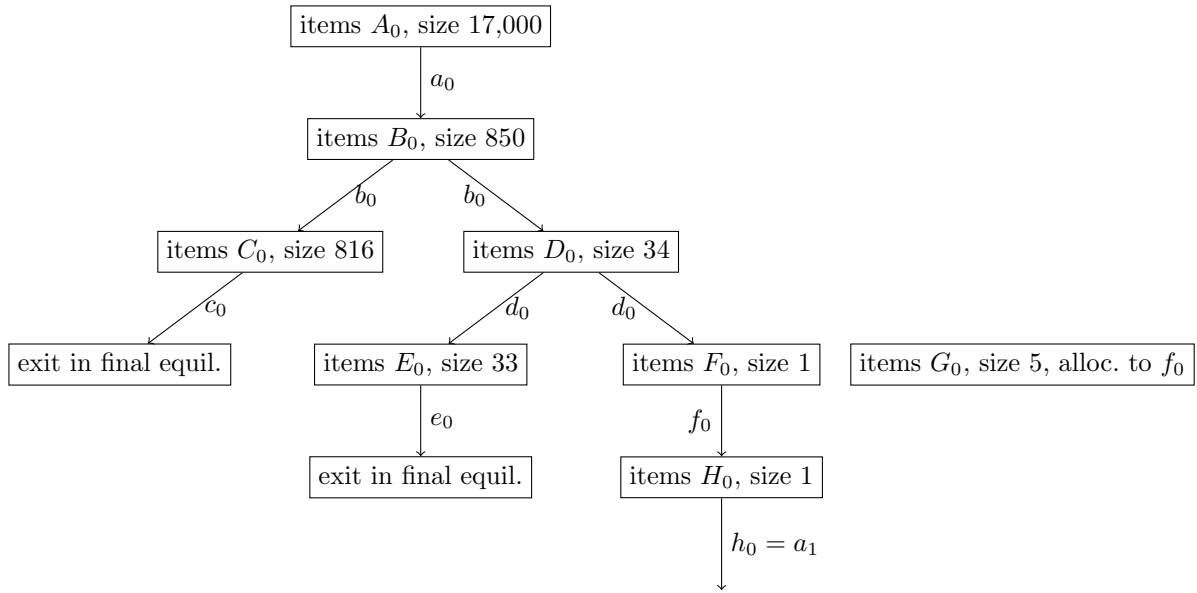

\begin{table*}[htbp]
\centering
\begin{tabular}{ccccccccccc}
item &    & $A_0$ & $B_0$ & $C_0$ & $D_0$ & $E_0$ & $F_0$ & $G_0$ & $H_0=A_1$ & $q$ value\\
size &    & 17,000 & 850 & 816 & 34 &  33 & 1 & 5 & 1 & \\ \hline\hline
\multicolumn{6}{l}{\bf The initial equilibrium:}&&&&\\
Bidder & size &&&&&&&& & 0\\
$a_0$	& 17,000 & {\bf 1} & 1.5 & 0 & 0 & 0 & 0 & 0 & 0 & 1\\
$b_0$   & 850    & 0 & {\bf 1}   & $\frac{4687}{7008}$ & 1.5 & 0 & 0 & 0 & 0 & 0.5\\
$c_0$   & 816 & 0 & 0 &  {\bf 1} & 0 & 0 & 0 & 0 & 0 & 0\\
$d_0$    & 34 & 0 & 0 & 0 &  {\bf 1} & $5/11$ & 2 & 0 & 0 & 0\\
$e_0$    & 33 & 0 & 0 & 0 & 0 & {\bf 1} & 0 & 0 & 0 & 0\\
$f_0$    & 6  & 0 & 0 & 0 & 0 & 0 & \boldmath{$\frac 83$} & \boldmath{$\frac 23$} & $\frac 53$ & $\frac{2}{3}$\\
$h_0$    & 1  & 0 & 0 & 0 & 0 & 0 &     0 & 0 &{\bf 1} & 0 \\ \hline
$t$ value &   & 0 & 0.5 & 1 & 1 & 1 & 2 & 0 & 1\\ 
\\
\multicolumn{6}{l}{\bf The final equilibrium:}&&&&\\
$a_0$	& 17,000 & \boldmath{$\frac{40}{41}$} & \boldmath{$\frac{60}{41}$} & 0 & 0 & 0 & 0 & 0 & 0 & $\frac{40}{41}$\\
$b_0$   & 850    & 0 & $\frac{292}{205}$   & \boldmath{$\frac{4687}{4920}$} & \boldmath{$\frac{438}{205}$}  & 0 & 0 & 0 & 0 & $\frac{192}{205}$\\
$d_0$    & 34 & 0 & 0 & 0 & 2 & \boldmath{$\frac{10}{11}$} & {\bf 4} & 0 & 0 & $\frac{4}{5}$\\
$f_0$   & 6  & 0 & 0 & 0 & 0 & 0 & $\frac{16}{5}$ & \boldmath{$\frac{4}{5}$} & {\bf 2} & 0\\
$h_0= a_1$    & 1  & 0 & 0 & 0 & 0 & 0 &     0 & 0 & 2 & 0 \\ \hline
$t$ value &   & 0 & $\frac{20}{41}$ & $\frac{79}{4920}$ & $\frac{6}{5}$ & $\frac{6}{55}$ & $\frac{16}{5}$ & $\frac{4}{5}$ & $2= \pAFa$\\ 
\end{tabular}
\caption{\label{tbl::Co}Market $M_0$, showing normalized valuations, multiplicity of bidders and items (their sizes), assignments (in bold), and the $t$ and $q$ values, for both the initial and final equilibria. The overlap with Market $M_1$ lies in item $H_0$ which is also item $A_1$ and bidder $h_0$ who is also bidder $a_1$.
Note that the $t$ values for $H_0=A_1$ are the same in markets $M_0$ and $M_1$ in both equilibria, as are the $q$ values for $h_0 = a_1$.}
\end{table*}

We continue by presenting the constructions of markets $M_r$, for $1 \le r < s$ and
of market $M_s$ in Tables~\ref{tbl::Cr} and~\ref{tbl::Cs}, respectively. 
$M_s$ is very similar to $M_r$; the only difference lies
in the presence of one additional item $I_s$, which is the item $e_s$, the losing
bidder,  will receive in the final equilibrium, plus one additional bidder, $i_s$,
who leaves in the final setting.
In these markets, all items are fully allocated in both equilibria.

To complete the construction we have to show that the various unspecified 
parameters
can be chosen so that the conditions of 
\eqref{tight-utility-rule} are satisfied for every item-bidder pair.
(It is immediate that~\eqref{p-q-constraints}--\eqref{q-positive}
are satisfied.)

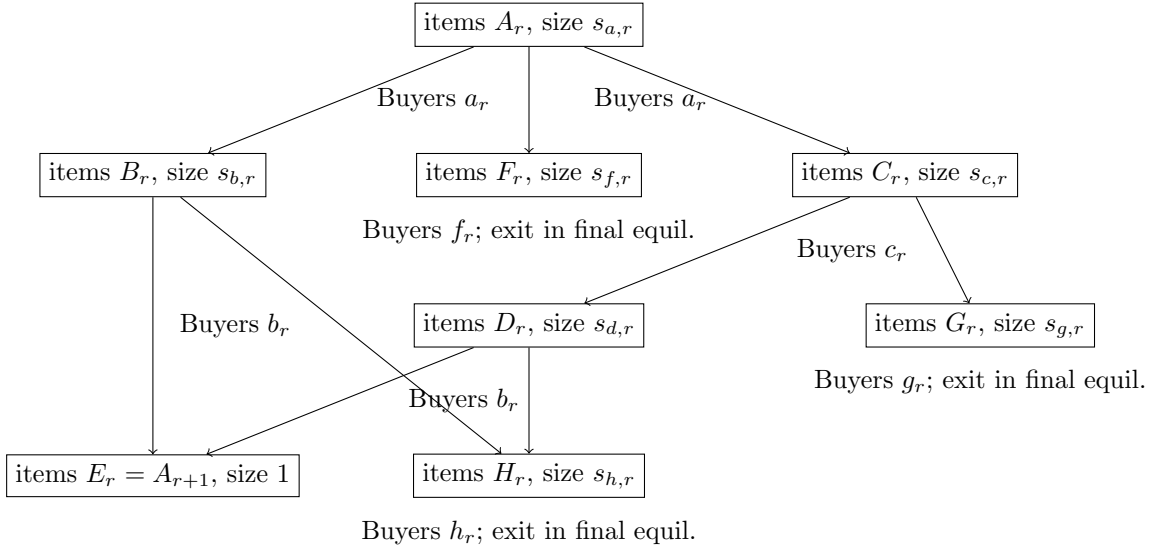
\begin{figure*}[htbp]
\begin{center}
\begin{tikzpicture}
\path (0,0) node [rectangle,draw](a) {items $A_r$, size $\sar$} --
(-5,-2) node[rectangle,draw](b) {items $B_r$, size $\sbr$} --
(0,-2) node[rectangle,draw](f) {items $F_r$, size $\sfr$} --
(-0, -2.75) node(ff) {Buyers $f_r$; exit in final equil.} --
(5,-2)  node[rectangle,draw](c) {items $C_r$, size $\scr$} --
(0,-4)  node[rectangle,draw](d) {items $D_r$, size $\sdr$} --
(6,-4)  node[rectangle,draw](g) {items $G_r$, size $\sgr$} --
(6, -4.75) node(gg) {Buyers $g_r$; exit in final equil.} --
(-5,-6)  node[rectangle,draw](e) {items $E_r= A_{r+1}$, size $1$}--
(0,-6)  node[rectangle,draw](h) {items $H_r$, size $\shr$}--
(0, -6.75) node(hh) {Buyers $h_r$; exit in final equil.};
 \draw[->] (a) -- node[right] {~~~Buyers $a_r$} (b);
  \draw[->] (a) --  (f);
 \draw[->] (a) -- node[left] {Buyers $a_r$~} (c);
 \draw[->] (b)  -- node[right] {~~Buyers $b_r$} (e);
 \draw[->] (b) --  (h);
 \draw[->] (c)  -- node[right] {~~~~~~~~Buyers $c_r$} (d);
  \draw[->] (c)  --  (g);
 \draw[->] (d)  --  (e);
  \draw[->] (d) --  node[left] {Buyers $b_r$}(h);
\end{tikzpicture}
\caption{\label{fig::Mr-flow}The Allocations in Market $M_r$, $1 \le r < s$.
The parameters are specified in the proof of Lemma~\ref{lem::param-choice}.}
\end{center}
\end{figure*}

\begin{table*}
\centering
\begin{tabular}{ccccccccccc}
item &    & $A_r$ & $B_r$ & $C_r$ & $D_r$ & $E_r$ & $F_r$  & $G_r$ & $H_r$ & $q$ value\\
       &    &          &          &          &          & $=A_{r+1}$ &  &  \\
size &    & $\sar$ & $\sbr$ & $\scr$ & $\sdr$ & $1$ & $\sfr$ & $\sgr$ & $\shr$ & \\ \hline\hline
\multicolumn{6}{l}{\bf Initial equilibrium:}&&&&\\
Bidder & size &&&&&& &&& \\
$a_r$	& $\sar $ & {\bf 1} & 2 & $\frac 12$ & 0 & 0 & $\vIfr$ & 0 & 0 &  0\\
$b_r$   & $\sbr+\sdr$    & 0 & \boldmath{$\frac {16}{7}$}   & 0 &  \boldmath{$\frac {2}{7}$}  & $ \frac {9}{7}$ & 0  & 0 &$\vIhr$ & $\frac{2}{7}$\\
$c_r$   & $\scr$ & 0 & 0 &\boldmath{$1$} & $\frac 12$ & 0 & 0 & $\vIgr$ & 0 & $\frac 12$\\
$f_r$ & $\sfr$ & 0 & 0 & 0 & 0 & 0 & {\bf 1} & 0 & 0 & 0\\
$g_r$ & $\sgr$ & 0 & 0 & 0 & 0 & 0 & 0 & {\bf 1} & 0 & 0\\
$h_r$ & $\shr$ & 0 & 0 & 0 & 0 & 0 & 0 & 0 & {\bf 1} & 0\\
\hline
$t$ value &   & 1 & 2 & $\frac 12$ & 0 & $1$ & 1 & 1 & 1\\ 
&&&&&&&&\\
\multicolumn{6}{l}{\bf Final equilibrium:}&&\\
$a_r$	& $\sar$ & \boldmath{$\frac{8}{9}\vr$} & \boldmath{$\frac{16} {9}\vr$} & \boldmath{$\frac{4} {9}\vr$}& 0 & 0 & \boldmath{$\vFfr$} &  0 & 0 &0 \\
$b_r$ & $\sbr+\sdr$ & 0 & $\frac{16} {9}\vr$ & 0 & $\frac{2} {9}\vr$  & \boldmath{$\vr$} & 0 & 0 & \boldmath{$\vFhr$} & 0
\\ 
$c_r$    & $\scr$  & 0 & 0 & $\frac{4} {9}\vr$ & \boldmath{$\frac{2} {9}\vr$} & 0 & 0  & \boldmath{$\vFgr$} & 0 & 0\\
\hline
$t$ value  &   & $\frac{8}{9}v_r$ & $\frac{16}{9}v_r$ & $\frac{4}{9}v_r$ & $\frac{2}{9}v_r$ & $v_r$  & $\vFfr$ & $\vFgr$ & $\vFhr$\\ 
\end{tabular}
\caption{\label{tbl::Cr}Component $C_r$, showing normalized valuations, multiplicity of bidders and items (their sizes), and the $t$ and $q$ values, for both the initial and final equilibria.
$\valFar$ and $\valFcr$ are normalizing factors, equal to the value of the
assignments using the initial valuations.}
\end{table*}

\begin{figure}[htbp]
\begin{center}

\begin{tikzpicture}
\path (0,0) node[rectangle,draw](e) {items $E_s$, size $1$}--
(0,-2) node[rectangle,draw](i) {items $I_s$, size $1$} --
(0, -2.75) node(ii) {Buyer $i_s$; exit in final equil.};
 \draw[->] (e) -- node[right] {Buyer $e_s$, the loser} (i);
\end{tikzpicture}
\caption{\label{fig::Ms-flow}The Allocations in Market $M_s$.}
\end{center}
\end{figure}

\begin{table}
\centering
\begin{tabular}{ccccc}
item &   & $E_s$ & $I_s$ & $q$ value\\
size &    & $1$ & $1$      &  \\ \hline\hline
\multicolumn{4}{l}{\bf The initial equilibrium:}\\
Bidder & size &&& \\
$e_s$& 1	&  {\bf 1} & $1/(v_s+1)$ & 0\\
$i_s$   & 1    & 0 & {\bf 1}   & 0\\ \hline
$t$ value &   & 1 & 1  &\\ 
&&&&\\
\multicolumn{4}{l}{\bf The final equilibrium:}\\
$e_s$	& 1   & $v_s+1$ & 1 & 1\\ \hline
$t$ value &   &  $v_s$   & 0  & \\ 
\end{tabular}
\caption{\label{tbl::Cs}Component $C_s$, showing the additional portion in addition to the part shown in Table~\ref{tbl::Cr}.}
\end{table}

\begin{lemma}
\label{lem::param-choice}
There are choices of values for the unspecified parameters for which
the valuations specified above yield
the claimed initial and final equilibria.
\end{lemma}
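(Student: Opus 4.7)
My plan is to verify the construction by checking that the KKT conditions \eqref{p-q-constraints}--\eqref{tight-utility-rule} hold for the displayed allocations in both the initial and the final equilibria, for a suitable choice of the unspecified parameters (the sizes $\sar,\sbr,\scr,\sdr,\sfr,\sgr,\shr$ and the scalars $v_r$, $\vIfr,\vIgr,\vIhr,\vFfr,\vFgr,\vFhr$). The strategy is to treat the $t$ and $q$ columns of the tables as given, read off the valuations $\vij$ forced by the tightness condition $\vij=t_j+q_i$ on every cell with $p_{ij}>0$, and then choose the remaining parameters so that (i) the allocation row sums and column sums are correct, (ii) the normalizations $\sum_j\vij p_{ij}=1$ hold in each equilibrium, and (iii) the inequality $\vij\le t_j+q_i$ holds on every non-allocated but positively-valued cell.

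I would proceed in four steps. First, I would verify mass balance for each submarket $M_r$: the sizes in the header rows of Tables~\ref{tbl::Co}--\ref{tbl::Cs} make both equilibria doubly stochastic once the multiplicities are taken into account. This reduces to a handful of counting identities per submarket (and, at the seam $E_r=A_{r+1}$, $h_r=a_{r+1}$, consistency of the shared entity's contribution). Second, I would fix the $t$ and $q$ values as displayed in the tables and observe that the normalized valuations $\vij$ for allocated cells are thereby determined. Because a single instance must induce both equilibria, I would then rescale by $\valFar,\valFcr,\ldots$ (the initial-valuation values of the final assignments) to convert final-equilibrium-normalized numbers back into the common underlying scale; the free parameters $\vIfr,\vIgr,\vIhr,\vFfr,\vFgr,\vFhr$ (and $v_r$) are exactly the degrees of freedom needed to absorb these rescalings without conflict.

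Third, I would check that the valuations implied by the tight cells in one equilibrium agree with those implied by the tight cells in the other equilibrium. Since both the initial- and final-equilibrium tables list the same row entries $\vij$ (up to the common per-row scaling), this amounts to showing that for every allocated cell present in both equilibria the two derived values coincide; since each cell that is allocated in both equilibria already lists a single $\vij$ in the row, this is automatic by construction. The only subtle place is the pairs that are allocated in one equilibrium only — here I would simply verify that the non-allocation inequality $\vij<t_j+q_i$ required by \eqref{tight-utility-rule} in the other equilibrium is satisfied; this yields concrete numerical inequalities on $v_r$ and the $s_{\cdot,r}$ sizes, all of which can be made strict by taking $v_r$ sufficiently large relative to the previous-layer parameters and by taking $\sar\gg\sbr\gg\scr\gg\sdr$ and $\sfr,\sgr,\shr$ at the appropriate scales indicated in $M_0$.

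Finally, I would verify the normalization identities $\sum_j\vij p_{ij}=1$ for each bidder in each equilibrium. These are linear in the free parameters once the $t,q$ values are fixed; plugging in the displayed rows of Tables~\ref{tbl::Cr}--\ref{tbl::Cs} gives explicit equations that determine $\vIfr,\vIgr,\vIhr,\vFfr,\vFgr,\vFhr$ in terms of $v_r$ and the sizes, and determine $v_r$ recursively from the interlock with market $M_{r+1}$ through the shared entity $E_r=A_{r+1}$, $h_r=a_{r+1}$. The recursion bottoms out at $M_s$ where only one additional item $I_s$ and one additional bidder $i_s$ appear; Table~\ref{tbl::Cs} then pins the remaining free scalars. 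The expected main obstacle is the global consistency of this recursion: showing that the $t$ and $q$ values of the shared entity $E_r=A_{r+1}$ are simultaneously $1$ (resp.\ $v_r$) when computed from $M_r$ and from $M_{r+1}$, in both equilibria. That, however, is exactly the reason the tables are set up with the matching boundary values, and it should follow by induction on $r$ (downward from $s$) using the recursive definition of $v_r$ implicit in the normalization equations.
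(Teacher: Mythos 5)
Your proposal lays out a plausible verification framework (check mass balance, derive valuations from KKT tightness, reconcile the two normalizations, verify the non-allocation inequalities), and this high-level framing does match the spirit of the paper. But it stops at the level of a plan and never actually exhibits the parameter values, and in the one place where you offer something concrete you go wrong: you suggest that the required inequalities ``can be made strict by taking $v_r$ sufficiently large relative to the previous-layer parameters and by taking $\sar\gg\sbr\gg\scr\gg\sdr$.'' That is not available to you. The boundary condition at the seam $H_r=A_{r+1}$, $h_r=a_{r+1}$ forces $v_{r+1}=\frac{9}{8}v_r$ (starting from $2=\frac{8}{9}v_1$ in $M_0$), so $v_r=2\left(\frac{9}{8}\right)^r$ is completely determined, not a free dial. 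Likewise the sizes are pinned by exact balance identities — e.g.\ requiring bidder $b_r$'s average value to be $1$ forces $\sbr=\frac{5}{9}\sdr$, and $\sar=\sbr+\sfr+\scr$, $\scr=\sdr+\sgr$, $\sbr+\sdr=1+\shr$ fix the rest up to the choice of $\shr$. None of these admit a ``make it much larger'' slack; the construction works precisely because the ratios come out right, and the whole point of the recursion $v_r=2(9/8)^r$ is what later yields the $2^{\sqrt{\log n}/2}$ lower bound, so treating $v_r$ as tunable would also break the downstream theorem.

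The other gap is integrality. The sizes $\sar,\dots,\shr$ are multiplicities of items and bidders, so they must be positive integers, and the free scalars $\vFfr,\vFgr,\vFhr$ must then be chosen to absorb the rounding error. The paper handles this by choosing $\shr=14\floor{v_r/14}+13$ so that $1+\shr$ is a multiple of $14$, then setting $\sdr=\frac{9}{14}(1+\shr)$, $\sbr=\frac{5}{14}(1+\shr)$, and defining $\sgr$, $\sfr$ via floors/ceilings with compensating choices of $\vFgr,\vFfr$ to keep the normalization $\sum_j\vij p_{ij}=1$ exact. It also checks the inequality conditions $\vIhr\le\frac{9}{7}$ and $\vIgr\le\frac{3}{2}$ at the cells that are allocated in the final but not the initial equilibrium — and these checks use the determined value of $v_r$, not a freely chosen large one. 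Your step~3 dismisses this reconciliation as ``automatic by construction,'' but it is exactly where the real work happens. In short: the framework is right, but without the explicit recursion for $v_r$, the balance identities fixing the size ratios, and the floor/ceiling bookkeeping for integrality, the lemma is not proved, and the one heuristic you lean on (free choice of large $v_r$) contradicts the construction.
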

\begin{proof}
We need to choose the values $\vIfr$, $\vIgr$, $\vIhr$, $\vFfr$, $\vFgr$, 
$\vFhr$, the sizes $k_r$,
and the size proportionality factors $\sar$, $\sbr$, $\scr$, $\sdr$, $\sfr$, 
$\sgr$, $\shr$ so that
for each buyer in each equilibrium, its average value is 1, for $1\le r \le s$.
First, we set $\sar = \sbr + \sfr + \scr$,
$\scr = \sdr + \sgr$,
and $ \sbr + \sdr = 1 + \shr$.

Next, we observe that because the buyer $h_0=a_1$, its values
for $H_0=A_1$ are the same, i.e.\ $2 = \frac 89 v_1$, or $v_1 = 2 \cdot \frac 98$.
Similarly, item $E_r = A_{r+1}$, so $v_{r+1} = \frac 98 v_r$.
We conclude that $v_r = 2 \cdot \left(\frac 98 \right)^r$.

Now, for buyer $b_r$, we choose $\sbr = \frac 59 \sdr$, for 
\begin{align*}
\frac {\frac {16}{7}\sbr\kr + \frac 27 \sdr\kr} {\sbr\kr + \sdr\kr} 
=\frac{ \frac {16}{7} \cdot \frac 59 + \frac 27} {\frac 59 + 1} = 1.
\end{align*}
Thus $\sdr = \frac 9{14} (1 + \shr)$ and $\sbr = \frac 5{14} (1 + \shr)$.
To ensure these values are integers, we will make sure that $1 + \shr$ is an integer multiple of 14.

We turn to the values $\vFfr$, $\vFgr$, $\vFhr$.
We choose $\shr = 14 \floor{v_r/14}+ 13$, and $\vFhr$ to satisfy
$\vFhr \shr + v_r = \shr + 1$; i.e.\ 
$\vFhr = (\floor{v_r} +1 -v_r)/\floor{v_r}$.
We need to confirm that $\vIhr \le \frac 97$;
but $\vIhr = 9/(7v_r) \cdot \vFhr < \vFhr < 1$, as $v_r \ge \frac 94$.

Similarly, when $v_r > \frac 92$, we set $\sgr =  \floor{\frac29 v_r} \sdr$ 
(for when $v_r \le \frac 92$, this would set $\sgr = 0$), and
$\vFgr = \sgr  + \frac 29 v_r \sdr = \sgr + \sdr$;
i.e.\ $\vFgr = (\floor{\frac 29 v_r} \sdr +\sdr - \frac 29 v_r \sdr)/\floor{\frac 29 v_r} \sdr$.
Again, we need to confirm that $\vIgr \le \frac 32$;
but $\vIgr = 9/(4v_r)\cdot \vFgr \le \vFgr \le 1$, as $v_r > \frac 92$.

When $v_r < \frac 92$, we set 
$\sgr = \ceil{ \frac {9 - 2v_r} { \frac 23 v_r - 1} }$ and
$\vFgr = (\sgr + \sdr - \frac 29 v_r \sdr)/ \sgr$;
but then $\sdr = 9$, so 
$\vFgr = 1 + (9 - 2v_r)/\ceil{ \frac {9 - 2v_r} { \frac 23 v_r - 1} }$.
Again, we need to confirm that $\vIgr \le \frac 32$;
but $\vIgr = 9/(4v_r)\cdot \vFgr \le 9/(4v_r)(1 + \frac 23v_r -1) = \frac 32$.

As $v_r = 2 \cdot \left( \frac 98 \right)^r$, $v_r \ne \frac 92$ for any $r$.

Also, we set
$\sfr =  \floor{\frac {16}9 v_r}  \sbr + \floor{\frac 49 v_r} \scr$ and
$\vFfr \sfr  + \frac {16}9 v_r \sbr + \frac 49 v_r \scr = \sfr + \sbr + \scr$;
i.e.\ $\vFfr = (\floor{\frac {16}9 v_r}  \sbr + \floor{\frac 49 v_r} \scr + \sbr + \scr - ( \frac {16}9 v_r \sbr +  \frac 49 v_r \scr))/ (\floor{\frac {16}9 v_r} \sbr + \floor{\frac 49 v_r} \scr)$.

We can now calculate the following values.
\begin{align*}
\shr & =  14\floor{v_r/14} + 13\\
\sdr &=9(\floor{v_r/14} +1)\\
\sbr &= 5(\floor{v_r/14}+1)\\
\sgr &= 9\floor{2v_r/9}(\floor{v_r/14}+1)~~\text{for}~v_r > \frac 92\\
\sgr &= \ceil{ \frac {9 - 2v_r} { \frac 23 v_r - 1} }~~\text{for}~v_r < \frac 92\\
\scr &= 9(\floor{v_r/14}+1)\cdot(\floor{2v_r/9}+1)~~\text{for}~v_r > \frac 92\\
\scr &= 9 + \ceil{ \frac {9 - 2v_r} { \frac 23 v_r - 1} }~~\text{for}~v_r < \frac 92\\
\sfr &= 5\floor{16v_r/9}(\floor{v_r/14}+1) \\
&~~~+ 9\floor{4v_r/9}(\floor{v_r/14}+1)\cdot(\floor{2v_r/9}+1)~~\text{for}~v_r > \frac 92\\
\sfr &= 5\floor{16v_r/9}(\floor{v_r/14}+1)~~\text{for}~v_r < \frac 92\\
\sar & = 5(\floor{v_r/14}+1) + 9(\floor{v_r/14}+1)\cdot(\floor{2v_r/9}+1) \\
&~~~+ 5\floor{16v_r/9}(\floor{v_r/14}+1)\\
&~~~ + 9\floor{4v_r/9}(\floor{v_r/14}+1)\cdot(\floor{2v_r/9}+1)~~\text{for}~v_r > \frac 92\\
\sar & = 14 + \ceil{ \frac {9 - 2v_r} { \frac 23 v_r - 1} } 
+ 5\floor{16v_r/9}~~\text{for}~v_r < \frac 92.
\end{align*}
We also set $k_{r-1} = \sar k_r$ for $0 \le r < s$, and create $k_0$ copies of $M_0$. Recall that $k_s = 1$.
\end{proof}

To conclude the lower bound analysis we lower bound the size of $s$ and hence of $v_s$.
We observe that for $v_r > \frac 92$, $\sar \le  5(v_r/14 +1) +9(v_r/14 +1 )\cdot (2v_r/9 +1) + 5\cdot 16v_r/9 (v_r/14 +1) + 4v_r \cdot  (v_r/14 +1) \cdot (2v_r/9 +1) \le 4v_r^3/63 +91v_r^2/63 + 143v_r/9 +14 \le 2 v_r^3$, as $v_r \ge \frac 92$,
and for $v_r < \frac 92$, $\sar \le 14 + 9 + 80v_r/9 \le 4v_r^3$.

Note that $v_s = 2\left(\frac 98\right)^s$.
We can conclude that
\begin{align*}
k_0 & \le 4v_s^3 \cdot 4 \left (\frac 89 v_s\right)^3 \ldots
4 \left( \left(\frac 89\right)^{s-1}  v_s\right)^3 \\
&=4\left( 2 \left( \frac 98 \right)^s\right)^3
\cdot 4\left(2\left( \frac 98 \right)^{s-1}\right)^3 \ldots
\cdot \left( 2 \left( \frac 98 \right)\right)^3\\
&= (32)^s \left( \frac 98 \right)^{s(s-1)/2}\\
&\le \left( \frac 98 \right)^{(s^2+58s)/2}.
\end{align*}
Thus the size of item $A_0$ is at most $17000\left( \frac 98 \right)^{(s^2+58s)/2}$.
But this is more than $n/2$ by inspection of the construction.
Therefore  we can choose $s$ in our construction to satisfy 
$17000\left( \frac 98 \right)^{(s^2+58s)/2} \ge n/2$;
thus 
$$s^2 +58s)/2 \cdot \log \frac 98 + \log 17000 \ge \log n - 1.$$ and hence 
$$(s +29)^2 -29^2\ge 2(\log n -1- \log 17000)/\log \frac 98$$
So,  
$$s \ge (2[(\log n) -1 - \log 17000 +142] /\log \frac 98)^{1/2} - 29.$$

This implies 
\begin{align*} 
v_s &\ge \left(\frac 98\right)^{(2\log n/\log \frac 98)^{1/2} - 29}\\
&\ge 2^{\sqrt{2\log \frac 98 \log n} - 29 \log\frac 98}\\
&\ge 2^{ \sqrt{\log n}/2 - 5}.
\end{align*}

\end{document}